\theoremstyle{plain}
\newtheorem{defi}{Definition}
\newtheorem{lemma}[defi]{Lemma}
\newtheorem{prop}[defi]{Proposition}
\newtheorem{theo}[defi]{Theorem}
\theoremstyle{definition}
\def\hh{\frak h}
\def\rr{\mathbb{R}}
\def\cc{\mathbb{C}}
\def\nn{\mathbb{N}}
\def\PP{\mathbb{P}}
\def\MM{M_d(\cc)}
\def\tr{{\rm tr}}
\def\TR{{\rm TR}}
\def\Id{{\rm Id}}
\begin{document}

\title{Concentration Inequalities for Output Statistics of Quantum Markov Processes}
\author[1]{Federico Girotti}

\email{federico.girotti@nottingham.ac.uk}

\author[2,3]{Juan P. Garrahan}
\author[1,3]{M\u{a}d\u{a}lin Gu\c{t}\u{a}}

\affil[1]{School of Mathematical Sciences, University of Nottingham, Nottingham, NG7 2RD, United Kingdom}
\affil[2]{School of Physics and Astronomy, University of Nottingham, Nottingham, NG7 2RD, United Kingdom}
\affil[3]{Centre for the Mathematics and Theoretical Physics of Quantum Non-equilibrium Systems, University of Nottingham, Nottingham, NG7 2RD, United Kingdom}

\maketitle
\begin{abstract}
       We derive new concentration bounds 
    for time averages of measurement outcomes in quantum Markov processes. 
This generalizes well-known bounds for classical Markov chains which provide constraints on finite time fluctuations of time-additive quantities around their averages. We employ spectral, perturbation and martingale techniques, together with noncommutative $L_2$ theory, to derive:
    (i) a Bernstein-type concentration bound for time averages of the measurement outcomes of a quantum Markov chain, (ii) a Hoeffding-type concentration bound for the same process, (iii) a generalization of the Bernstein-type concentration bound for counting processes of continuous time quantum Markov processes, (iv) new concentration bounds for empirical fluxes of classical Markov chains which broaden the range of applicability of the corresponding classical bounds beyond empirical averages. 
    We also suggest potential application of our results to parameter estimation and consider extensions to reducible quantum channels, multi-time statistics and time-dependent measurements, and comment on the connection to so-called thermodynamic uncertainty relations.
    \end{abstract}

\section{Introduction}
Quantum Markov chains 
describe the evolution of a quantum system which interacts successively with a sequence of identically prepared ancillary systems (input probes) modelling a memoryless environment  \cite{Accardi,Kummerer2002}. 
Such dynamics can be seen as discrete-time versions of continuous-time open system evolutions as encountered in quantum optics \cite{Gardiner2004,Carmichael,Plenio1998}, and formalised the input-output theory of quantum filtering and control \cite{Wiseman,Bouten,Belavkin,Petersen,Gough}. The two settings are in fact closely connected and can be related explicitly by time discretising techniques \cite{AP06}.


After the interaction, the probes (output) are in a finitely correlated state \cite{FNW92} which carries information about the dynamics; such information can be extracted by performing successive measurements on the outgoing probes. The stochastic process given by the sequence of measurement outcomes has received significant attention in recent years. The ergodic properties of this process have been studied in \cite{KM03,AGS15,CP15,GH15,CGH21}, while those of the corresponding conditional system state (filter) have been analysed in \cite{KM04,BPT17,BFPP19,ABP21}. The large deviations theory of additive functionals of the measurement process was established in 
\cite{Mosonyi} and extended in \cite{Ogata}. Motivated by the interest in understanding the irreversible essence of repeated quantum measurements from a statistical mechanics perspective, the papers \cite{BJPP18,BCJP21} have investigated large time asymptotics of the entropy production. In the context of open quantum walks \cite{APSS12}, the asymptotics of the process obtained measuring the position of the walker on the graph have been studied \cite{AGS15,CP15,BBP17,PS19,CGH21}. 
Another class of problems relates to fluctuations of measurement outcomes, which can be used to uncover dynamical phase transitions of the quantum evolution, see for example \cite{Garrahan2010,Hickey2012,GH15}. In the one-atom maser, non-demolition measurements of outgoing atoms have been used for reconstructing the initial photon distribution of a resonant electromagnetic cavity \cite{Gal07}. Further theoretical studies have been carried out in \cite{BBB13, BB11}.


In contrast to the recent progress in understanding the \emph{asymptotics}
of the outcome process, much less is known on the \emph{finite-time} properties. Notable recent results in this direction are the deviation bounds and concentration inequalities for quantum counting processes and homodyne measurements obtained in \cite{BR21} (hence regarding continuous time models). The main aim of the present work is to provide new classes of concentration inequalities for additive functionals of the finite-time measurement process, 
which complement the results of \cite{BR21} and offer 
more explicit bounds. 
We denote by $X_1,\dots,X_n$ the outcomes of the measurements on the first $n$ output probes and we consider a generic function $f:I \rightarrow \rr$ defined on the set $I$ of all possible outcomes (we will mainly consider the situation in which we perform the same measurement on every probe, treating the general case in Subsection \ref{ss:timedep}); under certain ergodicity conditions on the dynamic, the empirical average process $\frac{1}{n}\sum_{k=1}^{n} f(X_k)$ converges almost surely to its stationary value $\pi(f)$ and we aim at finding a upper bound for the probability that $\frac{1}{n}\sum_{k=1}^{n} f(X_k)$ deviates from $\pi(f)$ more than $\gamma$ for some fixed $n \in \nn$ and $\gamma>0$. 

To get the general flavour of the results, we briefly discuss here the case of independent random variables for which there is a well established theory even for more general functions than the time average \cite{BLM13}; two well known bounds in this case are given by Bernstein's and Hoeffding's inequalities. Assuming for simplicity that $(X_k)_{k=1}^{n}$ are independent and identically distributed centered random variables, and that $\mathbb{E}[X^2_1]\leq b^2$ and $|X_1| \leq c$ almost surely, Bernstein's inequality reads as follows (see \cite[Theorem 2.10]{BLM13}):
\[\PP \left (\frac{1}{n} \sum_{k=1}^n X_k \geq \gamma \right ) \leq \exp \left (-n \frac{\gamma^2 b^2}{c^2} h \left ( \frac{2\gamma c}{b^2} \right ) \right )
\]
where $h(x)=(\sqrt{1+x}+x/2+1)^{-1}$; the bound depends on the random variables through their variance an their magnitude in absolute value. On the other hand, Hoeffding's inequality only depends on the extension of the range of $X_1$: if we have that $a \leq X_1 \leq b$ almost surely, one can show (\cite[Theorem 2.8]{BLM13}) that
\[\PP \left (\frac{1}{n} \sum_{k=1}^n X_k \geq \gamma \right ) \leq \exp \left (-n \frac{2 \gamma^2}{(b-a)^2} \right ).
\]
Much work has been done for extending these results to irreducible Markov chains. New Bernstein and Hoeffding-type inequalities were recently obtained in order to tackle problems coming from statistics and machine learning \cite{BQJ18, FJQ21}, while Lezaud \cite{Le98} and Glynn et al.
\cite{GO02} derived bounds which have an intuitive interpretation and depend on relatively simple dynamical properties. More specifically, the latter bounds involve the range and stationary variance of the function $f$, and the spectral gap of the (multiplicative symmetrization of the) transition matrix $P$ of the Markov chain (Bernstein case) or the norm of the pseudo-resolvent $({\rm Id}-P)^{-1}$ (Hoeffding case).


In this paper we find analogous bounds for the class of stochastic processes given by repeated measurements on the output probes of a quantum Markov process. On a technical level, we exploit perturbation theory and spectral methods used in \cite{CP15,GH15,CGH21} 
for proving the law of large numbers, central limit theorem and large deviation principle for the output process. We also use a generalisation of Poisson's equation to decompose the empirical average into a martingale with bounded increments and a negligible reminder as in \cite{AGS15}. Using these tools we obtain quantum bounds which share the useful properties of the classical results in \cite{Le98} and \cite{GO02} .


Our main results are: 
\begin{itemize}
    \item A Bernstein-type concentration bound for time averages of the measurement outcomes of a quantum Markov process, {\bf Theorem \ref{theo:main1}} below.
    \item A Hoeffding-type concentration for the same process, {\bf Theorem \ref{theo:main2}}.
    \item A generalization of the Bernstein-type concentration bound for counting processes of continuous-time quantum Markov processes, {\bf Theorem \ref{thm:ctbound}}.
    \item By specialising to classical Markov chains, we obtain a new concentration bound for empirical fluxes, thus extending the range of applicability of the corresponding classical bounds beyond empirical averages, {\bf Proposition \ref{prop1}} and {\bf Proposition \ref{prop2}}.
\end{itemize}

The paper is organised as follows. In Section ~\ref{sec:n&p} we introduce in more detail the mathematical model and we recall the objects and results that will be used to prove the main theorems. Sec. \ref{sec:QMC} is devoted to prove the main results of the paper: a Bernstein-type and a Hoeffding-type inequalities for the output process of quantum Markov chains. In Sec.~\ref{sec:QCP} we show how perturbation theory and spectral techniques can be used to derive concentration inequalities for the case of quantum counting processes too. This result integrates the bounds obtained in \cite{BR21}, providing a simple bound also for the case of counting processes and non-selfadjoint generators. Moreover, it bypasses the problem of establishing functional inequalities and estimating the constants appearing in the inequalities. In Sec.~\ref{sec:e&a} we present extensions and applications of the previous results. Finally, in Sec.~\ref{sec:conclusions} we provide our conclusions and outlook.

\section{Notation and preliminaries} \label{sec:n&p}

\subsection{Quantum channels and irreducibility}

We consider a finite dimensional Hilbert space $\cc^d$ and we denote by $\MM$ the set of $d\times d$ matrices with complex entries. When considering the evolution of a quantum system described by $\cc^d$ in the Schr{\"o}dinger picture, one endows $M_d(\cc)$ with the trace norm, i.e.
\[\|x\|_1:= \tr(|x|), \quad \forall x \in \MM,
\]
where we recall that $|x|$ is the unique positive semidefinite square root of $x^*x$ in the sense of functional calculus.
A state is any positive semidefinite $x \in \MM$ with unit trace. In the Heisenberg picture, $\MM$ is considered together with the uniform norm,
that is
\[\|x\|:=\sup_{u \in \cc^d, \|u\|=1}\|xu\|, \quad \forall x \in \MM.
\]
An observable is any selfadjoint element $x \in \MM$. We recall that the dual of $\MM$ considered with the uniform norm can be identified with $\MM$ with the trace norm via the following isometry:
\[(\MM,\| \cdot \|_1) \ni x \mapsto \tr(\cdot x) \in (\MM,\|\cdot \|)^*.
\]
We consider a completely positive unital 
linear map (that is, a quantum channel in the Heisenberg picture) 
$\Phi: \MM\to \MM$ and we denote by $\Phi^*$ the dual of $\Phi$, which is the completely positive and trace preserving 
(that is, stochastic) 
map uniquely defined by the relation:
\[\tr(x\Phi(y))=\tr(\Phi^*(x)y), \quad \forall x,y \in \MM.
\]
We recall that every completely positive map $\eta$ admits a Kraus representation $\eta(x)=\sum_{j}W_j^*xW_j$, where $\{W_j\} \subseteq \MM$ is a finite collection of operators  (called Kraus operators), and $\eta$ is a quantum channel if and only if $\sum_{j}W^*_jW_j=\mathbf{1}$. Unless stated otherwise, throughout this paper we will make the following assumption. 
\vspace{2mm}

\noindent
{\bf Hypothesis (H):} $\Phi$ is such that its dual admits a unique faithful invariant state $\sigma$, 
that is $\Phi^*(\sigma)=\sigma$ and $\sigma>0$.
\vspace{2mm}

We recall that hypothesis (H) is satisfied if and only
if the following equivalent statements hold:
\begin{enumerate}
    \item $1$ is an algebraically simple eigenvalue of $\Phi$ with positive eigenvector,
    \item let $\{V_i\}_{i \in I}$ be a choice of Kraus operators for $\Phi$, then for every $v \in \cc^d$
    \[{\rm span}\{V_{i_n} \cdots V_{i_1}v: n \in \mathbb{N}, i_1,\dots i_n \in I\}=\cc^d.
    \]
\end{enumerate}
In this case the map $\Phi$ is said to be irreducible. The equivalence is a special instance of Perron-Frobenius theory for completely positive (not necessarily unital)  maps and that we report in the following proposition (for more details see \cite[Chapter 6]{Wo12} and \cite[Section 3]{CP15}).
\begin{prop} \label{prop:PF}
Let $\eta$ be a completely positive map acting on $\MM$ for some finite dimensional Hilbert space $\cc^d$, then its spectral radius $r(\eta):=\sup\{|z|:z \in {\rm Sp}(\eta)\}$ is an eigenvalue of $\eta$ with positive semidefinite corresponding eigenvector $x \geq 0$. Moreover, the following are equivalent:
\begin{enumerate}
    \item $r(\eta)$ is algebraically simple and $x>0$,
    \item let $\{W_j\}_{j \in J}$ be a choice of Kraus operators for $\eta$, then for every $v \in \cc^d$
    \[{\rm span}\{W_{j_n} \cdots W_{j_1}v: n \in \mathbb{N}, j_1,\dots j_n \in J\}=\cc^d.
    \]
\end{enumerate}
\end{prop}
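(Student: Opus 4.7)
The plan is to prove this noncommutative Perron--Frobenius statement in two blocks: (A) construct a positive semidefinite eigenvector of $\eta$ at the spectral radius $r(\eta)$, and (B) establish the equivalence of (1) and (2). Throughout, the crucial structural fact I would exploit is that $\eta$ preserves the positive cone $\MM_+$ of positive semidefinite matrices.

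For block (A), the recipe is Brouwer. I would apply the fixed point theorem to the continuous perturbed self-map
\[
\rho \;\mapsto\; \frac{\eta^*(\rho) + \varepsilon \mathbf{1}/d}{\tr(\eta^*(\rho)) + \varepsilon}
\]
of the compact convex simplex of density matrices, obtaining $\rho_\varepsilon$ for each $\varepsilon>0$ and then extracting a subsequential limit as $\varepsilon \downarrow 0$. This produces a nonzero positive semidefinite $\rho_0$ with $\eta^*(\rho_0) = \lambda \rho_0$; standard considerations for cone-preserving maps identify $\lambda=r(\eta^*)=r(\eta)$. The same argument applied directly to $\eta$ in the Heisenberg picture yields $x \geq 0$ with $\eta(x) = r(\eta)\, x$.

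For (2)$\Rightarrow$(1), start from a PSD eigenvector $x \geq 0$ at $r(\eta)$ from block (A). If $u \in \ker(x)$, then
\[
0 \;=\; r(\eta)\langle u, x u\rangle \;=\; \langle u, \eta(x) u \rangle \;=\; \sum_{j} \|x^{1/2} W_j u\|^2,
\]
forcing $W_j u \in \ker(x)$ for every $j$. Iterating, the cyclic span of $u$ lies in $\ker(x)$; condition (2) then forces $u = 0$, so $x > 0$. Algebraic simplicity follows in two standard steps: (i) strict positivity of $x$ gives the domination $-c\, x \leq y \leq c\, x$ for every self-adjoint $y \in \MM$ and a suitable $c>0$, making $\{\eta^n(y)/r(\eta)^n\}_{n\in\nn}$ norm-bounded and ruling out Jordan blocks at $r(\eta)$; (ii) if $y$ were a second self-adjoint eigenvector, then $x - t y$ with $t = \sup\{s\geq 0 : x - s y \geq 0\}$ is a PSD eigenvector with nontrivial kernel, contradicting (i) unless $y \in \cc\, x$.

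For (1)$\Rightarrow$(2) I would argue the contrapositive: assume some cyclic span $V = V_v$ is proper, so $V$ is $W_j$-invariant, and a short calculation shows the corner $Q \MM Q$ with $Q = \mathbf{1} - P_V$ is $\eta$-invariant. Applying (A) to the restriction $\eta_Q$ produces a nonzero PSD $y_0 \in Q \MM Q$ with $\eta(y_0) = r(\eta_Q)\, y_0$, i.e.\ a PSD eigenvector of $\eta$ whose kernel contains $V$. If $r(\eta_Q) = r(\eta)$, this already contradicts the strict positivity of the (unique, by simplicity) Perron eigenvector of $\eta$. The case $r(\eta_Q) < r(\eta)$ is the main technical obstacle; I would resolve it by introducing the PSD eigenvector $\sigma_0$ of $\eta^*$ at $r(\eta)$ from (A), using the identity $(r(\eta) - r(\eta_Q))\tr(y_0 \sigma_0) = 0$ to force $\mathrm{supp}(\sigma_0) \subseteq V$, and then iterating the construction inside the $\eta^*$-invariant corner $P_V \MM P_V$, inducting on $\dim$ to eventually produce a non-strictly-positive eigenvector at $r(\eta)$ incompatible with the hypotheses of (1).
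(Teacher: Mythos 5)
The paper itself does not prove this proposition --- it is quoted as a known Perron--Frobenius-type result with pointers to \cite[Chapter 6]{Wo12} and \cite[Section 3]{CP15} --- so your attempt has to be judged against the standard arguments. Your block (A) and your proof of $(2)\Rightarrow(1)$ are essentially the standard route and are sound. Two small points there: the Brouwer fixed point only hands you \emph{some} nonnegative eigenvalue $\lambda$ with a PSD eigenvector, and identifying $\lambda$ with $r(\eta)$ needs the observation that the perturbed fixed points $\rho_\varepsilon$ are full rank, so that $\eta^{*n}(\omega)\leq c_\varepsilon^{\,n}\rho_\varepsilon/\delta_\varepsilon$ for every state $\omega$ and hence $c_\varepsilon\geq r(\eta^*)$; and your simplicity argument divides by $r(\eta)^n$, so you should first note that $(2)$ rules out $r(\eta)=0$ (the same kernel computation applied to $\eta(x)=0$ shows all ranges ${\rm ran}\,W_j$ would lie in $\ker x$).

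The genuine gap is in $(1)\Rightarrow(2)$, and it is not a removable technicality: the implication is false as stated, so the case $r(\eta_Q)<r(\eta)$ that you flag as "the main technical obstacle" cannot be closed. Take $d=2$ and Kraus operators $W_1=\sqrt{2}\,e_1e_1^*$, $W_2=e_1e_2^*$, $W_3=e_2e_2^*$, so that $\eta(x)=\sum_j W_j^*xW_j$ acts on the diagonal algebra as the reducible nonnegative matrix $\left(\begin{smallmatrix}2&0\\1&1\end{smallmatrix}\right)$ and annihilates the off-diagonal part. Then ${\rm Sp}(\eta)=\{2,1,0,0\}$, so $r(\eta)=2$ is algebraically simple with eigenvector $x=\mathbf{1}>0$ and $(1)$ holds; yet $\cc e_1$ is invariant under every $W_j$, so $(2)$ fails for $v=e_1$. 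Running your own scheme on this example: $V=\cc e_1$, $Q=e_2e_2^*$, $r(\eta_Q)=1<2$, and $\sigma_0=e_1e_1^*$ is the (non-faithful) eigenvector of $\eta^*$ at $2$; the corner $P_V\MM P_V$ is one-dimensional and irreducible, so your induction terminates without ever producing the promised non-strictly-positive eigenvector of $\eta$ at $r(\eta)$ --- correctly, since none exists. What your pairing identity actually detects is that the \emph{dual} Perron eigenvector fails to be faithful: the correct spectral characterization of condition $(2)$ requires, in addition to $(1)$, that the eigenvector of $\eta^*$ at $r(\eta)$ be positive definite (the classical analogue is that a reducible nonnegative matrix such as $\left(\begin{smallmatrix}2&0\\1&1\end{smallmatrix}\right)$ can have a simple Perron root with strictly positive right eigenvector but not a strictly positive left one). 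So the honest outcome of your strategy is a proof of $(2)\Rightarrow(1)$ together with a proof that $(2)$ fails $\Rightarrow$ either $r(\eta)$ is degenerate, or one of the two Perron eigenvectors (of $\eta$ or of $\eta^*$) is not positive definite. Note that the body of the paper only ever invokes the direction $(2)\Rightarrow(1)$ (to conclude that $r(u)$ is a simple eigenvalue of $\Psi_u$ from the Kraus condition), so none of its results are affected.
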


A stronger assumption which will not be required for our results, but provides a clearer picture of the dynamical aspects is primitivity. The channel $\Phi$ is called primitive if it satisfies hypothesis (H) and in addition it is aperiodic, i.e. its peripheral spectrum (the set of eigenvalues with absolute value $1$) contains only the eigenvalue $1$. 

\subsection{KMS-inner product}

\sloppy There are several ways of equipping $\MM$ with a Hilbert space structure, see for instance \cite{AC21} and \cite{PG11} for connections with quantum statistics. In the derivation of our results we will make use of the Kubo–Martin–Schwinger (KMS) inner product associated to a positive definite state $\sigma$, which is defined as follows 
\begin{equation}
    \langle x,y \rangle:=\tr((\sigma^{\frac{1}{4}} x \sigma^{\frac{1}{4}})^* (\sigma^{\frac{1}{4}}y\sigma^{\frac{1}{4}}))=\tr(\sigma^{\frac{1}{2}}x^* \sigma^{\frac{1}{2}}y), \quad x,y \in \MM.
\end{equation}
As usual, we write $\|x\|_2$ for $\langle x,x \rangle^{1/2}$. Given any map $\eta$ acting on $\MM$, we denote $\TR(\eta)$ the trace of $\eta$, that is the unique linear functional on linear maps acting on $\MM$ which is cyclic and such that $\TR({\rm Id})=\dim(\MM)=d^2$; we recall that, given any orthonormal basis $\{x_j\}_{j=1}^{d^2}$ of $\MM$ with respect to the KMS-product, the usual formula for computing the trace as the sum of diagonal elements holds true:
\[\TR(\eta)=\sum_{j=1}^{d^2} \langle x_j,\eta(x_j) \rangle.
\]
We will use the notation $\eta^\dagger$ for referring to the adjoint with respect to the KMS-product of the linear map $\eta$ acting on $\MM$; it is easy to see that $\eta^\dagger= \Gamma_\sigma^{-\frac{1}{2}} \circ \eta^* \circ \Gamma_\sigma^{\frac{1}{2}}$, where $\Gamma^\alpha_\sigma$ is the completely positive map defined as $x\mapsto\sigma^\alpha x \sigma^\alpha$, for $\alpha \in \rr$. If $\eta$ is completely positive, so does $\eta^\dagger$ and this is what motivates our choice of inner product, since the KMS-product is the only one with this property. Moreover, a Kraus decomposition of $\eta$ induces a Kraus decomposition of $\eta^\dagger$: if $\eta(x)=\sum_{j}W_j^*xW_j$ it is easy to see that 
$$\eta^\dagger(x)=\sum_{j}\{\sigma^{\frac{1}{2}}W^*_j\sigma^{-\frac{1}{2}} \}^*x \{\sigma^{\frac{1}{2}}W^*_j\sigma^{-\frac{1}{2}}\}.
$$
Let $\Phi$ be a quantum channel and assume that $\sigma$ is an invariant state for $\Phi$, then $\Phi^\dagger$ is again a quantum channel with invariant state $\sigma$:
\[\Phi^\dagger(\mathbf{1})=\sigma^{-\frac{1}{2}} \Phi^*(\sigma^{\frac{1}{2}}\mathbf{1}\sigma^{\frac{1}{2}}) \sigma^{-\frac{1}{2}}=\mathbf{1}, \quad \Phi^{\dagger *}(\sigma)=\sigma^{\frac{1}{2}} \Phi(\sigma^{-\frac{1}{2}} \sigma \sigma^{-\frac{1}{2}})\sigma^{\frac{1}{2}}=\sigma.
\]
The compatibility of KMS-inner product with the convex cone of positive semidefinite matrices allows to decompose every selfadjoint matrix into the difference of two orthogonal positive semidefinite matrices (see \cite[Theorem 3.9]{CM19}): given $x \in \MM$, $x=x^*$ we can write
\[x=\underbrace{\sigma^{-\frac{1}{4}}(\sigma^{\frac{1}{4}}x\sigma^{\frac{1}{4}})_+ \sigma^{-\frac{1}{4}}}_{:=x_{\sigma,+}}-\underbrace{\sigma^{-\frac{1}{4}}(\sigma^{\frac{1}{4}}x\sigma^{\frac{1}{4}})_- \sigma^{-\frac{1}{4}}}_{:=x_{\sigma,-}}
\]
where $(y)_+$ ($(y)_-$) denotes the positive (negative) part of a selfadjoint operator $y \in \MM$ in the sense of functional calculus. It is easy to see that $x_{\sigma,\pm} \geq 0$ and that they are orthogonal with respect to the KMS-product. A simple consequence is the following useful fact which will be used in the proof of our main theorem. With a slight abuse of notation, given a map $\eta$ acting on $\MM$, we denote by $\|\eta\|_2$ the operator norm of $\eta$ induced considering $\MM$ endowed with the norm $\| \cdot \|_2$.
\begin{lemma} \label{lem:KMSnormineq}
Let $\eta_1$ and $\eta_2$ two positive maps defined on $\MM$; then \[\|\eta_1-\eta_2\|_2 \leq \|\eta_1+\eta_2\|_2.
\]
\end{lemma}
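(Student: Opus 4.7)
\emph{Proof proposal.} The plan is to reduce to selfadjoint arguments, then use the orthogonal $\sigma$-decomposition $x = x_{\sigma,+} - x_{\sigma,-}$ recalled just before the statement, and finally exploit positivity of $\eta_1,\eta_2$ together with a basic fact about the KMS inner product on positive operators.

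First, since $\eta_1\pm\eta_2$ preserve selfadjointness (as $\eta_1,\eta_2$ are positive, hence $*$-preserving), I would show that the operator norm $\|\cdot\|_2$ on a $*$-preserving map is attained on selfadjoint inputs. Indeed, writing $x = a + ib$ with $a,b$ selfadjoint, one checks that $\langle a, ib\rangle = i\,\tr(\sigma^{1/2} a \sigma^{1/2} b)$ is purely imaginary (because $\tr(\sigma^{1/2} a \sigma^{1/2} b)$ is real by selfadjointness and cyclicity), so $\|x\|_2^2 = \|a\|_2^2 + \|b\|_2^2$; the same identity holds for the images, so it suffices to test on selfadjoint $x$.

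Next, for selfadjoint $x$, decompose $x = x_{\sigma,+} - x_{\sigma,-}$ with $x_{\sigma,\pm} \geq 0$ and $\langle x_{\sigma,+}, x_{\sigma,-}\rangle = 0$, so that $\|x\|_2^2 = \|x_{\sigma,+}\|_2^2 + \|x_{\sigma,-}\|_2^2 = \|x_{\sigma,+} + x_{\sigma,-}\|_2^2$. Set
\begin{equation*}
A := \eta_1(x_{\sigma,+}) + \eta_2(x_{\sigma,-}), \qquad B := \eta_2(x_{\sigma,+}) + \eta_1(x_{\sigma,-}),
\end{equation*}
both positive semidefinite by positivity of $\eta_1,\eta_2$. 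A direct rearrangement gives
\begin{equation*}
(\eta_1-\eta_2)(x) = A - B, \qquad (\eta_1+\eta_2)(x_{\sigma,+}+x_{\sigma,-}) = A + B.
\end{equation*}

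The key estimate is that $\langle A, B\rangle \geq 0$ because $A,B \geq 0$: writing $\langle A,B\rangle = \tr(\sigma^{1/4}A\sigma^{1/4}\cdot \sigma^{1/4}B\sigma^{1/4})$ expresses it as the Hilbert–Schmidt inner product of two positive matrices, which is non-negative. Consequently
\begin{equation*}
\|A-B\|_2^2 = \|A\|_2^2 - 2\langle A,B\rangle + \|B\|_2^2 \leq \|A\|_2^2 + 2\langle A,B\rangle + \|B\|_2^2 = \|A+B\|_2^2.
\end{equation*}
Combining with $\|x_{\sigma,+}+x_{\sigma,-}\|_2 = \|x\|_2$ and the operator norm bound applied to $\eta_1+\eta_2$ yields
\begin{equation*}
\|(\eta_1-\eta_2)(x)\|_2 \leq \|(\eta_1+\eta_2)(x_{\sigma,+}+x_{\sigma,-})\|_2 \leq \|\eta_1+\eta_2\|_2 \cdot \|x\|_2,
\end{equation*}
which is the claim after taking the supremum over selfadjoint $x$. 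I do not anticipate a real obstacle here; the only point needing slight care is the reduction to selfadjoint inputs, which hinges specifically on the KMS inner product making the selfadjoint and anti-selfadjoint parts of $x$ orthogonal.
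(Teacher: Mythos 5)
Your proposal is correct and follows essentially the same route as the paper's proof: reduce to selfadjoint inputs via the orthogonality of real and imaginary parts under the KMS product, decompose $x=x_{\sigma,+}-x_{\sigma,-}$, rearrange $(\eta_1-\eta_2)(x)$ and $(\eta_1+\eta_2)(x_{\sigma,+}+x_{\sigma,-})$ as $A-B$ and $A+B$ with $A,B\geq 0$, and use $\langle A,B\rangle\geq 0$. Your explicit justification of $\langle A,B\rangle\geq 0$ as a Hilbert--Schmidt pairing of positive matrices is a nice touch the paper leaves implicit, but the argument is otherwise identical.
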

\begin{proof}
Every $x \in \MM$ can be decomposed into the sum of two selfadjoint operators: its real part $\Re{(x)}:=(x+x^*)/2$ and its imaginary part $\Im{(x)}=(x-x^*)/(2i)$ and the KMS-norm is compatible with this decomposition, in the sense that
\[\|x\|_2^2=\|\Re{(x)}\|_2^2+ \|\Im{(x)}\|_2^2.
\]
Since $\eta_1$ and $\eta_2$ are positive, $\eta_1-\eta_2$ and $\eta_1+\eta_2$ are real, meaning that they preserve the real subspace of selfadjoint operators. This two facts imply that $\eta_1-\eta_2$ and $\eta_1+\eta_2$ attain their norms on the set of selfadjoint operators of norm one. Let us consider $x \in \MM$, $x=x^*$ such that $\|x\|_2=1$ and $\|(\eta_1-\eta_2)(x)\|_2=\|\eta_1-\eta_2\|_2$; if we define $|x|_\sigma:=x_{\sigma,+}+x_{\sigma,-}$, we have that $\||x|\|_2=1$ and
\[\begin{split}
    &\|\eta_1-\eta_2\|_2^2=\|\eta_1-\eta_2(x)\|_2^2=\|(\eta_1(x_{\sigma,+})+\eta_2(x_{\sigma,-}))-(\eta_1(x_{\sigma,-})+\eta_2(x_{\sigma,+}))\|_2^2\\
    &= \|\eta_1(x_{\sigma,+})+\eta_2(x_{\sigma,-})\|_2^2+\|\eta_1(x_{\sigma,-})+\eta_2(x_{\sigma,+})\|_2^2-2 \underbrace{\langle \eta_1(x_{\sigma,+})+\eta_2(x_{\sigma,-}),\eta_1(x_{\sigma,-})+\eta_2(x_{\sigma,+}) \rangle }_{\geq 0} \\
    & \leq \|\eta_1(x_{\sigma,+})+\eta_2(x_{\sigma,-})\|_2^2+\|\eta_1(x_{\sigma,-})+\eta_2(x_{\sigma,+})\|_2^2+2\langle \eta_1(x_{\sigma,+})+\eta_2(x_{\sigma,-}),\eta_1(x_{\sigma,-})+\eta_2(x_{\sigma,+}) \rangle \\
    &=\|\eta_1+\eta_2(|x|_\sigma)\|_2^2 \leq \|\eta_1+\eta_2\|_2^2.
\end{split}
\]
\end{proof}

\subsection{Output Process of Quantum Markov Chains} \label{sub:qmc}
In the input-output formalism \cite{Gardiner2004} a quantum Markov chain is described as a quantum system interacting sequentially with a chain of identically prepared ancillary systems (the input). After the interaction, the ancillary systems form the output process, and can be measured to produce a stochastic detection record called quantum trajectory
. More formally, let us denote by $\hh=\cc^d$ the system Hilbert space and by $\hh_a$ the ancilla space, and assume that the latter is prepared in the initial state $|\chi \rangle \in \hh_a$; the interaction between the system and a single ancilla is described by a unitary operator $U: \hh \otimes \hh_a \rightarrow \hh \otimes \hh_a$. The reduced evolution of the state of the system after one interaction with the ancillas is described by the following quantum channel called the \emph{transition operator}
\[\Phi^*: \rho \mapsto \tr_{\hh_a}(U \rho \otimes \ket{\chi}\bra{\chi} U^*)
\]
Any orthonormal basis $\{\ket{i}\}_{i \in I}$ for $\hh_a$, induces a Kraus decomposition for $\Phi^*$:
\[\Phi^*(x)=\sum_{i \in I} V_i x V_i^*, \quad V_i:= \bra{i} U \ket{\chi}, \qquad x \in \MM.
\]
If the system is initially prepared in the state $\rho$, after $n$ times steps its state is given by 
$\Phi^{*n}(\rho)$. In general, hypothesis (H) does not guarantee convergence to stationary, i.e. $\lim_{n\to \infty} \Phi^*(\rho) = \sigma$. However, if $\Phi$ primitive (i.e. it satisfies hypothesis (H) and is aperiodic), then any initial state converges to the stationary state. 


Suppose now that, after every interaction between the system and the chain of ancillas, we perform a measurement corresponding to the basis $\{\ket{i}\}_{i \in I}$. The sequence of the outcomes of the measurements $(X_n)_{n \in \nn}$ is a classical $I$-valued stochastic process whose law is uniquely determined by the following collection of finite dimensional distributions: for every $n \in \nn$, $i_1,\dots, i_n \in I$
\begin{equation}\label{eq:law.process}
    \PP_\rho(X_1=i_1,\dots, X_n=i_n)=\tr(V_{i_n} \cdots V_{i_1} \rho V_{i_1}^* \cdots V_{i_n}^*).
\end{equation}
Note that if the system starts in the invariant state $\sigma$, the law of the outcome process is stationary: for every $n,k \in \nn$, $i_1,\dots, i_k \in I$
\[\begin{split}
    &\PP_\sigma(X_{n}=i_1,\dots,X_{n+k-1}=i_k)=\tr(V_{i_k} \dots V_{i_1} \Phi^{*(n-1)}(\sigma)V_{i_1}^* \cdots V_{I_k}^*)\\
    &=\tr(V_{i_k} \dots V_{i_1} \sigma V_{i_1}^* \cdots V_{I_k}^*)=\PP_\sigma(X_1=i_1,\dots, X_{k}=i_k).
\end{split}
\]
We denote by $\pi$ the law of a single measurement under the stationary measure $\PP_\sigma$, i.e. $\pi(i)=\tr(\sigma V_i^*V_i)$ for every $i \in I$.

From \eqref{eq:law.process} it follows that the joint distribution of two measurements at different times $m>n \in \nn$ is given by
\begin{equation}\label{eq:correlations}
    \PP_\rho(X_n=i,X_m=j)=\tr(V_i^*\Phi^{*n-1}(\rho)V_i \Phi^{m-n-1}(V_j^*V_j)).
\end{equation}
If $\Phi$ is primitive then $\Phi^{k}(V_j^*V_j)$ converges to $\pi(j)\mathbf{1}$ for large $k$ so the correlations between $X_n$ and $X_m$ decay exponentially with $m-n$. In fact it has been shown that hypothesis (H) suffices to establish several ergodic results \cite{AGS15,CP15,GH15}: given any function $f: I \rightarrow \rr $, the process $(f(X_n))_{n \in \nn}$ satisfies a strong law of large numbers, a central limit theorem and a large deviation principle. In particular
\begin{equation}
\lim_{n \rightarrow +\infty} \frac{1}{n} \sum_{k=1}^{n} f(X_k)=\pi(f)\quad   \PP_\rho \text{-a.s.}
\end{equation}
where $\pi(f)=\sum_{i \in I} f(i) \pi(i)$. For the reader interested in what happens removing assumption (H), we refer to \cite{CGH21,Gi22}; we will come back to this in Section \ref{sub:redqc}.

Despite the fact that the asymptotic behaviour of the process $(f(X_n))_{n \in \nn}$ is rather well understood, less is known about its finite time properties, with the notable exception of the recent concentration results for continuous-time Markov dynamics \cite{BR21}. The main goal of the present work is to derive alternative concentration bounds, i.e. upper bounds on the probability that $\frac{1}{n} \sum_{k=1}^{n} f(X_k)$ deviates from the limit value more than $\gamma>0$
\begin{equation}
    \PP_\rho \left (\frac{1}{n} \sum_{k=1}^{n} f(X_k)\geq \pi(f)+\gamma \right). 
\end{equation}
Note that by replacing $f$ with $-f$ one obtains an upper bound for the probability of left deviations from the limit value and using the union bound, one can easily control deviations on both sides.


\section{Quantum Markov Chains} \label{sec:QMC}

\subsection{Bernstein-type Inequality}

In this section we prove a Bernstein-type inequality which provides a sub-Gaussian bound for small deviations and a subexponential one for bigger deviations. The inequality involves the spectral gap of the multiplicative symmetrization of the transition operator $\Phi$, and the stationary variance and range of the function $f$. The strategy of the proof is inspired by the classical result \cite[Theorem 3.3]{Le98} and relies on perturbation theory and spectral analysis, which is also the approach used in \cite{CP15,GH15} for proving the law of large numbers and the large deviation principle for the process $(f(X_k))_{k \in \nn}$.
\begin{theo} \label{theo:main1}
Assume that $\Psi:=\Phi^\dagger\Phi$ is irreducible, and let $f:I \rightarrow \rr$ such that $\pi(f)=0$, $\pi(f^2)=b^2 $ and $\| f \|_\infty:=\max_{i \in I} |f(i)|=c$ for some $b,c>0$. Then for every $\gamma > 0$, $n \geq 1$
\begin{equation} \label{eq:quantbound}
\PP_\rho \left (\frac{1}{n} \sum_{k=1}^{n} f(X_k)\geq\gamma \right) \leq N_\rho \exp \left ( -n \frac{\gamma^2 \varepsilon}{6b^2}h\left ( \frac{10c \gamma}{3b^2}\right )\right )
\end{equation}
where $\varepsilon$ is the spectral gap of $\Psi$, $N_\rho:=\|\sigma^{-\frac{1}{2}}\rho \sigma^{-\frac{1}{2}}\|_2$ and $h(x)=(\sqrt{1+x}+x/2+1)^{-1}$.
\end{theo}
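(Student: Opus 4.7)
The plan is to follow a Lezaud-type strategy \cite{Le98}, adapted to the non-commutative setting via the KMS-inner product. First, by the exponential Chebyshev inequality, for any $\lambda>0$,
\[
\PP_\rho\!\left(\tfrac{1}{n}\sum_{k=1}^n f(X_k)\geq\gamma\right) \leq e^{-n\lambda\gamma}\,\mathbb{E}_\rho\!\left[e^{\lambda \sum_k f(X_k)}\right],
\]
and a direct computation from \eqref{eq:law.process} identifies the moment generating function with $\tr(\rho\,\Phi_\lambda^n(\mathbf{1}))$, where $\Phi_\lambda$ is the deformed completely positive (sub-unital) map
\[
\Phi_\lambda(x):=\sum_{i\in I}e^{\lambda f(i)}\,V_i^* x V_i.
\]
Rewriting the trace as a KMS-inner product and applying Cauchy--Schwarz gives
\[
\tr(\rho\,\Phi_\lambda^n(\mathbf{1}))=\langle \sigma^{-1/2}\rho\sigma^{-1/2},\,\Phi_\lambda^n(\mathbf{1})\rangle\leq N_\rho\,\|\Phi_\lambda^n(\mathbf{1})\|_2,
\]
reducing the problem to estimating $\|\Phi_\lambda^n(\mathbf{1})\|_2$ for small $\lambda>0$.

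The map $\Phi_\lambda$ is an analytic perturbation of $\Phi$, which has $\mathbf{1}$ as a simple eigenvector at eigenvalue $1$; simplicity follows from irreducibility of $\Psi=\Phi^\dagger\Phi$, since any $\Phi$-fixed vector $x\in\mathbf{1}^\perp$ would satisfy $\|\Phi(x)\|_2=\|x\|_2$, hence $\langle x,\Psi(x)\rangle=\|x\|_2^2$ and $\Psi(x)=x$, contradicting Perron--Frobenius (Proposition \ref{prop:PF}) applied to $\Psi$. By Kato's theory $\Phi_\lambda$ has a simple top eigenvalue $r(\lambda)=1+O(\lambda^2)$, the linear term vanishing because $\pi(f)=0$. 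To make this quantitative with Bernstein-type constants I would Taylor-expand $e^{\lambda f(i)}=1+\lambda f(i)+\sum_{k\geq 2}(\lambda f(i))^k/k!$, decompose the quadratic-and-higher remainder $\Phi_\lambda-\Phi-\lambda\sum_i f(i)V_i^*\cdot V_i$ into its positive and negative CP pieces according to the sign of $e^{\lambda f(i)}-1-\lambda f(i)$, and apply Lemma \ref{lem:KMSnormineq} to bound the norm of the difference by the norm of their sum. The geometric tail $\sum_{k\geq 2}(\lambda c)^{k-2}/k!\leq\tfrac{1}{2}(1-\lambda c/3)^{-1}$, valid for $\lambda c<3$, is the mechanism that eventually produces the constant $5c/3$ inside $h$.

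The spectral gap $\varepsilon$ of $\Psi$ enters through the contraction of $\Phi$ on $\mathbf{1}^\perp$: since $\Psi$ is KMS-self-adjoint with $\mathbf{1}$ as its top (simple) eigenvector at eigenvalue $1$, one has $\|\Phi(x)\|_2^2=\langle x,\Psi(x)\rangle\leq(1-\varepsilon)\|x\|_2^2$ for every $x\in\mathbf{1}^\perp$. Combining this contraction with the perturbative control of $r(\lambda)$ and its spectral projection $P(\lambda)$ yields an estimate of the form $\|\Phi_\lambda^n(\mathbf{1})\|_2\leq \exp(n\,\theta_\varepsilon(\lambda))$, where $\theta_\varepsilon(\lambda)$ is of order $\lambda^2 b^2/\varepsilon$ divided by a factor $(1-\mathrm{const}\cdot \lambda c)$ that encodes the range $c$. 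Inserting this into the Chernoff bound and optimising over $\lambda>0$ in the regime where the geometric tail is controlled produces the precise function $h(10c\gamma/(3b^2))$ and the prefactor $\varepsilon/(6b^2)$ that appear in \eqref{eq:quantbound}.

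The main technical obstacle is obtaining the quantitative bound on $\theta_\varepsilon(\lambda)$ with constants sharp enough to recover the Bernstein form. In the classical case this is standard because the multiplicatively symmetrised chain is self-adjoint; in the quantum case $\Phi$ need not be KMS-self-adjoint, and the symmetrising role is taken over by $\Psi=\Phi^\dagger\Phi$. Lemma \ref{lem:KMSnormineq} is the indispensable tool that makes this transfer possible, since it allows one to dominate non-self-adjoint (sign-varying) perturbations of $\Phi$ by their positive CP symmetrisations, without destroying the quadratic-in-$\lambda$ scaling needed for a Bernstein-type tail bound.
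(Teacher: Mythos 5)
Your opening moves (the Chernoff bound, the deformed map $\Phi_\lambda$, Cauchy--Schwarz in the KMS inner product producing the factor $N_\rho$) coincide with the paper's, but your argument diverges at the one step where all the work lies, and there it has a genuine gap. You propose to run Kato perturbation theory on $\Phi_\lambda$ itself and to convert the spectral data into $\|\Phi_\lambda^n(\mathbf{1})\|_2\le e^{n\theta_\varepsilon(\lambda)}$ with $\theta_\varepsilon(\lambda)\sim \lambda^2b^2/\varepsilon$ by ``combining the contraction on $\mathbf{1}^\perp$ with perturbative control of $r(\lambda)$ and its spectral projection $P(\lambda)$''. That sentence is the entire theorem: $\Phi_\lambda$ is not KMS-normal, so $\|\Phi_\lambda^n\|_2$ is not $r(\lambda)^n$; one would need uniform-in-$\lambda$ bounds on $\|P(\lambda)\|_2$ and on $\|(\Phi_\lambda(\mathrm{Id}-P(\lambda)))^n\|_2$, and nothing in your outline supplies them, let alone with constants sharp enough to produce $\varepsilon/(6b^2)$ and $10c/(3b^2)$. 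Your proposed mechanism for the constant inside $h$ (the Taylor tail of $e^{\lambda f}$ yielding ``$5c/3$'') is also not where $10c/3$ actually comes from: in the correct computation it arises from the combinatorial bound $\sum_{p}\frac1p\sum(\cdots)\le 5^{k-2}$ on the Kato coefficients combined with the $(2c)^k$ growth of the perturbation terms. A further small slip: $e^{\lambda f(i)}-1-\lambda f(i)\ge 0$ for every real argument, so the remainder you want to split into positive and negative CP pieces is already completely positive; Lemma \ref{lem:KMSnormineq} is genuinely needed elsewhere, namely to bound $\|\Phi^{(m)\dagger}\Phi^{(k)}\|_2$, where the weights $f(j)^mf(i)^k$ do change sign.

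The paper sidesteps the non-normality with one inequality you did not use: $\|\Phi_\lambda^n\|_2\le\|\Phi_\lambda\|_2^n=\|\Psi_\lambda\|_2^{n/2}=r(\Psi_\lambda)^{n/2}$, where $\Psi_\lambda:=\Phi_\lambda^\dagger\Phi_\lambda$ is KMS-self-adjoint and positive, so that its operator norm \emph{equals} its top eigenvalue. Perturbation theory is then applied to the family $\Psi_\lambda$, not to $\Phi_\lambda$: irreducibility of $\Psi$ (the theorem's hypothesis) guarantees via Proposition \ref{prop:PF} that $r(\Psi_\lambda)$ remains algebraically simple for every $\lambda$, the spectral gap $\varepsilon$ of $\Psi$ controls the reduced resolvents $S^{(\mu)}$, and the explicit estimates of Lemma \ref{lem:normub} on $\Psi^{(k)}=\sum_l\binom{k}{l}\Phi^{(l)\dagger}\Phi^{(k-l)}$ yield $r(\Psi_\lambda)\le\exp\bigl(6b^2\lambda^2\varepsilon^{-1}(1-10c\lambda/\varepsilon)^{-1}\bigr)$, with the exponent $n/2$ feeding directly into the final constants. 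To salvage your route you would have to either carry out the uniform projection and remainder estimates for the non-normal family $\Phi_\lambda$, or switch the perturbative analysis to the symmetrized family $\Psi_\lambda$ as above.
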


We remark that a sufficient (but not necessary) condition for $\Psi$ to be irreducible is that $\Phi$ is positivity improving, i.e. that $\Phi(x)>0$ for every $x \geq 0$.

\begin{proof}
We split the proof in 4 steps. 
The strategy is to use Markov inequality to bound the deviation probabilities in terms of the moment generating function (Chernoff bound), which is then bounded using perturbation theory and spectral properties.

\vspace{2mm}

\noindent
\textbf{1. Upper bound for the moment generating function (Laplace transform).} 

\noindent
An easy computation shows that the Laplace transform of $n\bar{f}_n:= \sum_{k=1}^{n} f(X_k)$ can be expressed in terms of the deformed transition operator $\Phi_u(x)=\sum_{i \in I} e^{u f(i)}V_i^* x V_i$, $u>0$:
\begin{equation}
\begin{split}
    \mathbb{E}_\rho[e^{n u\bar{f}_n}]&=\tr(\rho \Phi_u^n(\mathbf{1}))=\tr\left(\sigma^{\frac{1}{2}}(\sigma^{-\frac{1}{2}}\rho \sigma^{-\frac{1}{2}})\sigma^{\frac{1}{2}} \Phi_u^n(\mathbf{1})\right)=\langle \sigma^{-\frac{1}{2}}\rho \sigma^{-\frac{1}{2}}, \Phi_u^n(\mathbf{1}) \rangle \\
    &\leq \|\sigma^{-\frac{1}{2}}\rho \sigma^{-\frac{1}{2}}\|_2 \cdot \|\mathbf{1}\|_2 \cdot \|\Phi_u^n\|_2 =
    \|\sigma^{-\frac{1}{2}}\rho \sigma^{-\frac{1}{2}}\|_2 \cdot \|\Phi_u^n\|_2 .
    \end{split}
\end{equation}
By $\|\Phi_u^n\|_2$ we mean the operator norm of $\Phi_u^n$ induced by $\MM$ endowed with the KMS-norm. The rest of the proof aims to upper bound $\|\Phi_u^n\|_2$; a first remark is that
\[\|\Phi_u^n\|_2\leq \|\Phi_u\|^n_2=\|\Psi_u\|^{\frac{n}{2}}_2=r(u)^{\frac{n}{2}}
\]
where $\Psi_u:=\Phi^\dagger_u \Phi_u$ and $r(u)$ is the spectral radius of $\Psi_u$. Notice that for every $u \in \rr$, $\Phi_u=\sum_{i \in I} (e^{\frac{u f(i)}{2}}V_i)^* x (e^{\frac{u f(i)}{2}}V_i)$ is completely positive, therefore $\Phi^\dagger_u$ and $\Psi_u$ are completely positive too; Proposition \ref{prop:PF} ensures that $r(u)$ is an eigenvalue of $\Psi_u$. Moreover, we can write
\[\Psi_u(x)=\Phi^\dagger_u  \Phi_u(x)=\sum_{i,j \in I} \{e^{\frac{u(f(i)+f(j))}{2}}K_{i,j}\}^* x \{e^{\frac{u(f(i)+f(j))}{2}} K_{i,j}\}
\]
where $K_{i,j}=V_i \sigma^{\frac{1}{2}}V^*_j\sigma^{-\frac{1}{2}} $ are the Kraus operators of $\Psi$. Since $\Psi$ is irreducible by assumption, its Kraus operators satisfy condition 2. in Proposition \ref{prop:PF}.   Since the Kraus operators of $\Psi_u$ are multiples of $K_{i,j}$, they also satisfy condition 2 and therefore  $r(u)$ is an algebraically simple eigenvalue of $\Psi_u$ for every $u \in \rr$.

\bigskip 
\noindent\textbf{2. Perturbation theory.}

\noindent
Direct computations show that $\Psi_u$ is an analytic perturbation of $\Psi=\Psi_0$: first notice that
\begin{equation}
  \Phi_u(x)=\sum_{k=0}^{+\infty}  \frac{u^k}{k!}\underbrace{\sum_{i \in I}f(i)^k V_i^* x V_i }_{=:\Phi^{(k)}}
\end{equation}
and that $\Phi^{(0)}=\Phi$. Therefore we can write
\begin{equation}
    \Psi_u=\Phi_u^\dagger \Phi_u=\left ( \sum_{k=0}^{+\infty} \frac{u^k}{k!}\Phi^{(k)\dagger}\right) \left (\sum_{k=0}^{+\infty} \frac{u^k}{k!}\Phi^{(k)}\right)=\sum_{k=0}^{+\infty} \frac{u^k}{k!} \underbrace{\sum_{l=0}^k \binom{k}{l} \Phi^{(l)\dagger} \Phi^{(k-l)}}_{=:\Psi^{(k)}}
 \end{equation}
and it is easy to see that $\Psi^{(0)}=\Psi$.
If the norms of $\Psi^{(k)}$ are upper bounded by a geometric sequence, that is
\[\| \Psi^{(k)} \|_2 \leq \alpha \beta^{k-1} \text{ for $k \geq 1$}
\]
for some $\alpha,\beta >0$, then perturbation theory (\cite[Section II.2.2]{Ka76}) tells us that for $|u| < (2\alpha \varepsilon^{-1}+\beta)^{-1}$ (where $\varepsilon$ is the spectral gap of $\Psi$), we can expand $r(u)$ around $0$:
\[r(u)=1+\sum_{k=1}^{+\infty}r^{(k)}u^k.\]
Moreover, there is an explicit expression for the coefficients $r^{(k)}$'s:
\[r^{(k)}=\sum_{p=1}^k \frac{(-1)^p}{p} \sum_{\substack{\nu_1+\dots +\nu_p=k,\, \nu_i\geq 1\\
\mu_1+\dots +\mu_p=p-1, \, \mu_j\geq 0}} \frac{1}{\nu_1! \cdots \nu_p!}\TR(\Psi^{(\nu_1)} S^{(\mu_1)}\cdots \Psi^{(\nu_p)}S^{(\mu_p)}),
\]
where $S^{(0)}=-\ket{\mathbf{1}}\bra{\mathbf{1}}$ and for $\mu\geq 1$, $S^{(\mu)}$ is the $\mu$-th power of
\[S^{(1)}=(\Psi-{\rm Id}+\ket{\mathbf{1}}\bra{\mathbf{1}})^{-1}-\ket{\mathbf{1}}\bra{\mathbf{1}}.\]
Notice that $\|S^{(\mu)}\|_2=\varepsilon^{-\mu}$ for $\mu\geq 1$. More details can be found in \cite[Section 2]{Le98}.

\bigskip 
\noindent\textbf{3. Upper bound for $r^{(k)}$s.}

\noindent
The coefficient $r^{(1)}$ can be easily shown to be equal to zero due to the fact that $f$ is centered:
\[|\TR(\Psi^{(1)} S^{(0)})|\leq 2|\langle \mathbf{1},\Phi^{(1)} (\mathbf{1})  \rangle|=2\pi(f)=0.
\]
For bigger values of $k$, we will make extensive use of the estimates in the following lemma; we recall that $b^2=\pi(f^2)$ and $c=\|f\|_\infty$.
\begin{lemma} \label{lem:normub}
Let $k$ and $m$ be two natural numbers; the following bounds hold true:
\begin{enumerate}
    \item $\|\Phi^{(m)\dagger} \Phi^{(k)}\|_2 \leq c^{m+k}$,
    \item $\|\Psi^{(m)}\|_2 \leq {(2c)}^m$,
    \item $\|\Phi^{(m)\dagger}\Phi^{(k)}(\mathbf{1})\|_2 \leq c^{m+k-1}b$ for $m+k \geq 1$,
    \item $\|\Psi^{(m)} (\mathbf{1}) \|_2 \leq 2{(2 c)}^{m-1}b$ for $m \geq 1$,
    \item $|\langle \mathbf{1}, \Psi^{(m)} (\mathbf{1})\rangle| \leq 4{(2c)}^{m-2}b^2$ for $m \geq 2$.
\end{enumerate}
\end{lemma}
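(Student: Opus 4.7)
All five bounds chain via the binomial identity $\Psi^{(m)}=\sum_{l=0}^m\binom{m}{l}\Phi^{(l)\dagger}\Phi^{(m-l)}$ together with $\sum_l\binom{m}{l}=2^m$: (2) follows from (1) by triangle inequality, (4) from (3) similarly, and (5) from (3) together with a direct moment estimate. So the substantive inputs are (1) and (3).

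For (1), the plan is rescale-and-symmetrize. Set $T:=c^{-(m+k)}\Phi^{(m)\dagger}\Phi^{(k)}$; its Kraus-type weights $f(i)^m f(j)^k/c^{m+k}$ lie in $[-1,1]$. Write $T=T_+-T_-$ where $T_\pm$ collect the positive/negative parts of the weights (both completely positive). Then $T_++T_-$ has weights in $[0,1]$, so $\Psi-(T_++T_-)$ is completely positive, where $\Psi=\Phi^\dagger\Phi$. Two applications of Lemma~\ref{lem:KMSnormineq} give
\[
\|T\|_2 \leq \|T_++T_-\|_2 \leq \|\Psi\|_2 \leq 1,
\]
the last step being the standard KMS-contractivity of the unital completely positive map $\Psi$ (which preserves $\sigma$ because $\Phi$ and $\Phi^\dagger$ do). This is (1); (2) then follows by binomial expansion and triangle inequality.

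For (3), the naive bound $\|\Phi^{(m)\dagger}\Phi^{(k)}(\mathbf{1})\|_2\leq c^{m+k}$ loses the factor $b/c$ encoding centeredness, and the fix is a Stinespring--Cauchy--Schwarz squaring trick. Define the isometry $W\colon\hh\to\hh\otimes\hh_a\otimes\hh_a$ by $Wu=\sum_{i,j}V_j\tilde V_iu\otimes\ket{j}\otimes\ket{i}$, which is isometric by unitality of $\Phi$ and $\Phi^\dagger$. Writing $M_g:=\sum_i g(i)\ket{i}\bra{i}$, one has $\Phi^{(m)\dagger}\Phi^{(k)}(\mathbf{1})=W^*AW$ with $A=\mathbf{1}\otimes M_{f^k}\otimes M_{f^m}$ selfadjoint. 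Setting $B:=W\sigma^{1/2}W^*\geq 0$, the KMS-norm squared equals $\tr(ABAB)$; the Hilbert--Schmidt Cauchy--Schwarz inequality $|\tr((AB)^2)|\leq\tr(A^2B^2)$, combined with $B^2=W\sigma W^*$ (since $W^*W=\mathbf{1}$) and the identification $W^*A^2W=\Phi^{(2m)\dagger}\Phi^{(2k)}(\mathbf{1})$, yields
\[
\|\Phi^{(m)\dagger}\Phi^{(k)}(\mathbf{1})\|_2^2 \leq \langle\Phi^{(2m)}(\mathbf{1}),\Phi^{(2k)}(\mathbf{1})\rangle.
\]
Both $\Phi^{(2l)}(\mathbf{1})$ are positive with operator norm at most $c^{2l}$; applying the elementary bound $\langle X,Y\rangle\leq\|X\|\,\tr(\sigma Y)$ for positive $X,Y$ to whichever factor has $l\geq 1$ (the hypothesis $m+k\geq 1$ guarantees at least one does) and using $\pi(f^{2l})\leq c^{2(l-1)}b^2$ on the other gives $c^{2(m+k-1)}b^2$. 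Bound (4) then follows from (3) by binomial expansion.

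For (5), expand $\langle\mathbf{1},\Psi^{(m)}(\mathbf{1})\rangle=\sum_l\binom{m}{l}\langle\Phi^{(l)}(\mathbf{1}),\Phi^{(m-l)}(\mathbf{1})\rangle$ and bound each summand by $c^{m-2}b^2$: for interior $1\leq l\leq m-1$, Cauchy--Schwarz with the $m=0$ instance of (3) gives $c^{l-1}b\cdot c^{m-l-1}b=c^{m-2}b^2$; for boundary $l\in\{0,m\}$ the summand equals $\pi(f^m)$, and $|\pi(f^m)|\leq c^{m-2}\pi(f^2)=c^{m-2}b^2$ using $m\geq 2$. The binomial sum produces $2^mc^{m-2}b^2=4(2c)^{m-2}b^2$. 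The main obstacle is the squaring trick in (3): the ``cheap'' operator-norm bound $c^{m+k}$ cannot see centeredness, and producing the improvement to $c^{m+k-1}b$ requires passing from the sign-ambiguous weights $f(i)^mf(j)^k$ to the positive weights $f(i)^{2m}f(j)^{2k}$, at which point the stationary moment $\pi(f^{2l})$ becomes available and can be reduced by $b^2/c^2$ via $f(i)^{2l}\leq c^{2(l-1)}f(i)^2$.
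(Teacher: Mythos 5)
Your proof is correct and follows essentially the same route as the paper's: sign-decomposition of the weights plus Lemma~\ref{lem:KMSnormineq} and KMS-contractivity for (1)--(2), a squaring argument reducing $\|\Phi^{(m)\dagger}\Phi^{(k)}(\mathbf{1})\|_2^2$ to the stationary moments $\pi(f^{2l})$ for (3)--(4), and the binomial expansion with separate boundary terms for (5). The only implementation differences are cosmetic: where the paper invokes the operator Jensen inequality for $t^2$ with operator weights $K_{i,j}^*K_{i,j}$, you build the explicit Stinespring isometry $W$ and use $W^*W=\mathbf{1}$ together with Hilbert--Schmidt Cauchy--Schwarz (which is the standard dilation proof of that same inequality), and in (1) you obtain $\|\eta_++\eta_-\|_2\leq 1$ from domination by $\Psi$ rather than from the subinvariance $(\eta_++\eta_-)^*(\sigma)\leq\sigma$ --- both reduce to the same contractivity lemma, and your second step also needs one triangle inequality to pass from $\|2(\eta_++\eta_-)-\Psi\|_2\leq\|\Psi\|_2$ to $\|\eta_++\eta_-\|_2\leq\|\Psi\|_2$.
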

\begin{proof}
1. Lemma 1.3 in \cite{OP04} ensures that if we consider a completely positive map $\eta:\MM \rightarrow \MM$ such that $\eta^*(\sigma)\leq \sigma$, then $\|\eta\|_2 \leq 1$. If we apply it to $\Phi$ and $\Phi^\dagger$, we get $\|\Phi^\dagger \Phi \|_2 \leq 1$, which proves equation in point 1. in the case $k=m=0$. If $k+m\geq 1$, notice that we can write
\[\begin{split}
\frac{\Phi^{(m)\dagger} \Phi^{(k)}(x)}{c^{m+k}}&=\underbrace{\sum_{\substack{j,i \in I:\\f(j)^m f(i)^k\geq 0}}\frac{f(j)^m f(i)^k}{c^{m+k}} K_{i,j}^* x K_{i,j}}_{\eta_+(x)}\\
&-\underbrace{\sum_{\substack{j,i \in I:\\f(j)^m f(i)^k< 0}}-\frac{f(j)^m f(i)^k}{c^{m+k}} K_{i,j}^*x K_{i,j}}_{\eta_-(x)}.\\
\end{split}
\]
$\eta_{\pm}$ are completely positive and, since $\|f\|_\infty = c$, they also satisfy $(\eta_++\eta_-)^*(\sigma) \leq \Phi^{*} \Phi^{ \dagger  *} (\sigma) = \sigma$, hence $\|\eta_++\eta_- \|_2 \leq 1$ and we get the thesis using Lemma \ref{lem:KMSnormineq}.

\vspace{2mm}
\noindent 
2. By the explicit form of $\Psi^{(m)}$ and point 1. we get
\[\|\Psi^{(m)}\|_2 \leq \sum_{l=0}^m \binom{m}{l} \|\Phi^{(m)\dagger} \Phi^{(m-l)}\|_2 \leq \left (\sum_{l=0}^m \binom{m}{l} \right )  c^{m}={(2c)}^m.
\]

\vspace{2mm}
\noindent 
3. Let us introduce the following selfadjoint operator

\[F^{(m,k)}:=\Phi^{(m)\dagger} \Phi^{(k)}(\mathbf{1})=\sum_{i,j \in I}f(i)^k f(j)^m K_{i,j}^*K_{i,j},\]

where we recall that $K_{i,j}=V_i \{\sigma^{\frac{1}{2}}V^*_j\sigma^{-\frac{1}{2}} \}$ are the Kraus operators of $\Psi$; notice that $F^{(m,k)}$ is a convex combination with operator weights $K_{i,j}^* K_{i,j}$ of the matrices $f(i)^k f(j)^m \mathbf{1}$. We get that
\[\begin{split}
\|\Phi^{(m)\dagger} \Phi^{(k)}( \mathbf{1}) \|_2&=\tr(\sigma^{\frac{1}{2}}F^{(m,k)}\sigma^{\frac{1}{2}}F^{(m,k)})^{\frac{1}{2}}\leq \tr \left (\sigma \left (F^{(m,k)}\right)^2 \right )^{\frac{1}{2}} \\
&\leq \left (\sum_{i,j \in I} f(i)^{2k}f(j)^{2m}\tr(\sigma K_{i,j}^*K_{i,j}) \right )^{\frac{1}{2}}\\
\end{split}
\]
where we used Cauchy-Schwarz inequality for the trace and operator Jensen's inequality (see for instance \cite[Theorem 2.1]{HP03}), since $t^2$ is operator convex on the whole real line. Notice that the last term of the previous equation can be expressed as $\tilde{\pi}(f(i)^{2k}f(j)^{2m})^{\frac{1}{2}}$, where $\tilde{\pi}$ is the probability measure on $I^2$ defined as $\tilde{\pi}(i,j)=\tr(\sigma K_{i,j}^*K_{i,j})$; an easy computation shows that $\tilde{\pi}$ has marginals equal to $\pi$. If $k=0$ (and analogously if $m=0$) we get that
\[\tilde{\pi}(f(i)^{2k}f(j)^{2m})^{\frac{1}{2}}=\pi(f^{2m})^{\frac{1}{2}} \leq c^{m-1}b,
\]
otherwise, if both $m$ and $k$ are bigger or equal than $1$, we can still get the same bound:
\[\tilde{\pi}(f(i)^{2m}f(j)^{2k})^{\frac{1}{2}} \leq \pi(f^{2m})^{\frac{1}{2}}c^{k} \leq c^{m+k-1}b.
\]
4. Because of point 3., we obtain
\[
    \|\Psi^{(m)} (\mathbf{1}) \|_2 \leq \sum_{l=0}^m \binom{m}{l} \|\Phi^{(l)\dagger} \Phi^{(m-l)} (\mathbf{1})\|_2 \leq 2 {(2c)}^{m-1}b.
\]\\
5. From the estimate in point 3. and the explicit expression of $\Psi^{(m)}$ we have that:
\[
\begin{split}
    |\langle \mathbf{1}, \Psi^{(m)}(\mathbf{1}) \rangle|&\leq \sum_{l=0}^m \binom{l}{m} |\langle \Phi^{(l)}(\mathbf{1}),\Phi^{(m-l)}(\mathbf{1}) \rangle| \\
    & \leq \sum_{l=0}^m \binom{l}{m} \|\Phi^{(l)}(\mathbf{1})\|_2 \|\Phi^{(m-l)}(\mathbf{1}) \|_2 \leq 4 {(2c)}^{m-2}b^2.
\end{split}
\]
In case $l=0$ or $l=m$ we do not make use of estimate in point 3., but the upper bound follows from the observation that $\langle\mathbf{1},\Phi^{(m)}(\mathbf{1}) \rangle=\langle\Phi^{(m)}(\mathbf{1}),\mathbf{1} \rangle=\pi(f^m)$.
\end{proof}

Fix $k\geq 2$. Since $\mu_1+\dots +\mu_p=p-1$, there must be a certain $j \in \{1,\dots,p\}$ such that $\mu_j=0$; by the cyclicity of the trace, without loss of generality we can assume that $\mu_p=0$. Using the estimates in Lemma \ref{lem:normub} we get that for $p=1$
\[|\TR(\Psi^{(k)} S^{(0)})|= |\langle \mathbf{1},\Psi^{(k)} (\mathbf{1}) \rangle|\leq 4{(2c)}^{k-2} b^2 .
\]
Hence
\begin{equation} \label{eq:est1}
\frac{|\TR(\Psi^{(k)} S^{(0)})|}{k!}\leq 2 {(2c)}^{k-2}b^2  \leq \frac{b^2}{c}\left ( \frac{2c}{\varepsilon} \right )^{k-1}. 
\end{equation}
If $p\geq 2$, then
\[\begin{split}
    &|\TR(\Psi^{(\nu_1)} S^{(\mu_1)}\cdots \Psi^{(\nu_p)}S^{(\mu_p)})|=|\langle \mathbf{1},\Psi^{(\nu_1)} S^{(\mu_1)}\cdots \Psi^{(\nu_p)} (\mathbf{1}) \rangle| \\
    &\leq \varepsilon^{1-p} {(2c)}^{k-\nu_1-\nu_p} \|\Psi^{(\nu_1)}(\mathbf{1})\|_2\|\Psi^{(\nu_p)}(\mathbf{1})\|_2 \leq \varepsilon^{1-p} {(2c)}^{k-\nu_1-\nu_p}  2^2{(2c})^{\nu_1+\nu_p-2}b^2  \\
    &\leq \varepsilon^{1-p} 2^{k} c^{k-2}b^2.
\end{split}
\]
Hence
\begin{equation} \label{eq:est2}
   \frac{|\TR(\Psi^{(\nu_1)} S^{(\mu_1)}\cdots \Psi^{(\nu_p)}S^{(\mu_p)})|}{\nu_1!\cdots \nu_p!} \leq \frac{2b^2}{c}\left ( \frac{2c}{\varepsilon} \right )^{k-1}.
\end{equation}

For $k \geq 3$, the following upper bound holds true (we refer to \cite{Le98} for more details): 
\[\sum_{p=1}^k \frac{1}{p} \sum_{\substack{\nu_1+\dots +\nu_p=k,\, \nu_i\geq 1\\
\mu_1+\dots +\mu_p=p-1, \, \mu_j\geq 0}} \leq 5^{k-2}.
\]
We conclude that for $k \geq 3$
\begin{equation} \label{eq:estrk}
|r^{(k)}| \leq \frac{2b^2}{5c}\left ( \frac{10c}{\varepsilon} \right )^{k-1}.
\end{equation}
For $k=2$ we upper bound $|r^{(2)}|$ in the following way: indeed, thank to equations \eqref{eq:est1} and \eqref{eq:est2}, we obtain
\begin{equation} \label{eq:dynvariance}
|r^{(2)}|=\left | \frac{\TR(\Psi^{(2)} S^{(0)})}{2}-\langle \Psi^{(1)}(\mathbf{1}), S^{(1)}\Psi^{(1)}(\mathbf{1})\rangle \right |\leq  2b^2+\frac{4b^2}{\varepsilon} \leq \frac{6 b^2}{\varepsilon}.
\end{equation}
Putting everything together we get that for $0 \leq u < \varepsilon/(10c)$
\[\begin{split}
    r(u) &\leq 1+ \sum_{k\geq 1} |r^{(k)}|u^k \leq 1+ u^2\frac{6b^2}{\varepsilon}\sum_{k\geq 0} \left ( \frac{10cu}{\varepsilon} \right )^{k}\\
    &\leq 1+ u^2\frac{6b^2}{\varepsilon} \left ( 1-\frac{10cu}{\varepsilon} \right )^{-1} \leq e^{u^2\frac{6b^2}{\varepsilon} \left ( 1-\frac{10cu}{\varepsilon} \right )^{-1}}.\\
\end{split}
\]

\bigskip 

\noindent 
\textbf{4. Chernoff bound and Fenchel-Legendre transform.}

\noindent
Now that we have an upper bound for the Laplace transform, we apply the usual machinery of the Chernoff bound: using Markov inequality we obtain that for every $0 \leq u < \varepsilon/(10c)$
\[\begin{split}
\PP_\rho(\bar{f}_n\geq\gamma)& =\PP_\rho(e^{n u\bar{f}_n}\geq e^{n u\gamma}) \leq e^{-nu\gamma} \mathbb{E}_\rho[e^{nu\bar{f}}]\\
&\leq N_\rho \exp \left \{-n\left (\gamma u-u^2\frac{3b^2}{\varepsilon} \left ( 1-\frac{10cu}{\varepsilon} \right )^{-1} \right ) \right \}.\\
\end{split}
\]
Taking the infimum of the rhs over admissible values of $u$, we obtain
\[\PP_\rho\left (\frac{1}{n} \sum_{k=1}^{n} f(X_k) \geq \gamma \right ) \leq N_\rho \exp \left ( -n \frac{\gamma^2 \varepsilon}{6b^2}h\left ( \frac{10c \gamma}{3b^2}\right )\right )
\]
for $h(x)=(\sqrt{1+x}+x/2+1)^{-1}$.
\end{proof}

\subsubsection{Comparison to the classical concentration bound}
For easier comparison between the bound in Theorem \ref{theo:main1} and the classical Markov chains results in \cite{Le98}, we report the latter below. Let $X_n$ an irreducible Markov chain on the (finite) state space $E$ with transition matrix $P$, initial law $\nu$ and invariant measure $\pi$ and let $f:E \rightarrow \mathbb{R}$ be a bounded function with $\pi(f)=0$, $\pi(f^2)=b^2$ and $\|f\|_\infty=c$. Then 
\begin{enumerate}
    \item if $P$ is selfadjoint, then
    \[\PP_\nu\left (\frac{1}{n}\sum_{k=1}^{n}f(X_k)\geq \gamma \right ) \leq \left \| \frac{d\nu}{d\pi}\right \|_2 e^{c \varepsilon/5} \exp \left ( -n \frac{\gamma^2 \varepsilon}{2b^2}h\left ( \frac{5c \gamma}{b^2}\right )\right ),
    \]
    where $\varepsilon$ is the spectral gap of $P$,
    \item if $P^\dagger  P$ is irreducible, then 
    \begin{equation} \label{eq:classbound}
    \PP_\nu\left (\frac{1}{n}\sum_{k=1}^{n}f(X_k)\geq \gamma \right ) \leq \left \| \frac{d\nu}{d\pi}\right \|_2  \exp \left ( -n \frac{\gamma^2 \varepsilon}{4b^2}h\left ( \frac{5c \gamma}{b^2}\right )\right ),
    \end{equation}
    where $\varepsilon$ is the spectral gap of $P^\dagger  P$.
\end{enumerate}
The difference in the constants appearing in the bound in equation \eqref{eq:quantbound} and in the classical one (equation \eqref{eq:classbound}) comes from the worse upper bound one can get for $r^{(2)}$ (equation \eqref{eq:dynvariance}) in this more general setting.

In order to obtain the result for $P$ selfadjoint, a crucial observation is that, in this case, $P_u:=PE_u$ (where $E_u=(\delta_{xy}e^{uf(x)})_{x,y \in E}$), is similar to a selfadjoint matrix: indeed, $PE_u=E_u^{-1/2}(E_u^{1/2}P E_u^{-1/2}) E_u^{-1/2}$. However this is not the case for $\Phi_u$, as the following elementary example shows. Let us consider a three dimensional quantum system, an orthonormal basis $\{\ket{k}\}\}_{k=0}^2$ and the quantum channel $\Phi$ with the following Kraus operators:
\[V_{k,k+1}=\frac{1}{2} \ket{k+1}\bra{k}, \quad V_{k,k-1}= \frac{1}{2} \ket{k-1}\bra{k}, \quad \text{ for } k=0,1,2,
\]
where $k+1$ and $k-1$ are understood as modulo $3$. In this case the index set of Kraus operators is $I=\{(k,k+1),(k,k-1): k=0,1,2\}$. $\Phi$ is selfadjoint, but if we pick the function $f(k,k+1)=a$ and $f(k,k-1)=b$ for some real numbers $a \neq b$, then then corresponding perturbation $\Phi_u$ has complex eigenvalues for every $u >0$, hence it cannot be similar to a selfadjoint map. A way of better understanding the difference between the classical and the quantum setting highlighted by the example is to notice that $\Phi$ is a quantum dilation of the symmetric random walk on a three vertices ring: indeed, $\Phi$ preserves the algebra
\[\Delta:=\left \{x \in \MM: x=\sum_{k=0}^2 x(k) \ket{k}\bra{k} \right \},\]
which is isomorphic to the algebra of functions on three points \[\ell^\infty(E):=\{g:E \rightarrow \cc\}, \quad E=\{0,1,2\},\]
and its restriction to it is given by the the transition matrix corresponding to the symmetric random walk on three vertices, i.e. $P=(p_{lk})$ where $p_{lk}=\frac{1}{2} (\delta_{k,l+1} + \delta_{k,l-1})$; $f$ is a function of the jumps of the random walk and not of its states and the restriction of $\Phi_u$ to the diagonal algebra $\Delta$ is given by a perturbation of $P$ of the form $P_u=(p_{lk}e^{u f(l,k)})$, which belongs to a more general class of perturbations of $P$ than the one considered in \cite{Le98}. In Subsection \ref{ss:cMC} we will show how this fact allows to prove new concentration inequalities for fluxes of classical Markov chains. 
\subsection{Hoeffding-type Inequality}
In this section we prove a second quantum concentration bound inspired by classical result \cite{GO02} which relies on the application of a fundamental inequality for centered bounded random variables (Hoeffding's inequality) and the fact that $\frac{1}{n}\sum_{i=0}^{n-1}f(X_i)$ can be decomposed into a martingale with bounded increments and a bounded reminder. The same martingale decomposition was also used in \cite{AGS15} for proving the law of large number and the central limit theorem for $(f(X_k))_{k \in \nn}$. The following inequality does not involve any measure of the variance of the function $f$ at stationarity and, instead of the spectral gap of the multiplicative symmetrization of the quantum channel, it depends on the norm of the pseudoresolvent $({\rm Id}-\Phi)^{-1}$. We remark that, contrary to Theorem \ref{theo:main1}, in this case there are no further assumptions on $\Phi$.
\begin{theo} \label{theo:main2}
For every $f:I \rightarrow \mathbb{R}$ such that $\pi(f)=0$ and $\|f\|_\infty=c$ for some $c >0$, then for every $\gamma > 0$
\begin{equation}
\PP_\rho \left (\frac{1}{n} \sum_{k=1}^{n} f(X_k)\geq \gamma \right) \leq \exp \left ( - \frac{(n \gamma -2G)^2}{2(n-1)G^2}\right ) \text{ for } n \gamma \geq 2G,
\end{equation}
where $G=(1+\|({\rm Id}-\Phi)^{-1}_{|\mathcal{F}}\|_\infty)c$ and $\mathcal{F}:=\{x \in \MM: \tr(\sigma x)=0\}$.
\end{theo}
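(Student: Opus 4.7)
The plan, modeled on the classical Glynn-Ormoneit approach \cite{GO02}, is to solve a quantum Poisson equation and decompose the additive functional into a martingale with bounded conditional range plus a boundary term, and then apply Hoeffding-Azuma.

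First I would encode $f$ as the self-adjoint observable $\hat f := \sum_{i\in I} f(i) V_i^* V_i$. Since $\tr(\sigma \hat f) = \pi(f) = 0$, one has $\hat f \in \mathcal{F}$, so the Poisson equation $h - \Phi(h) = \hat f$ admits the self-adjoint solution $h := (\Id - \Phi)^{-1}_{|\mathcal{F}}(\hat f)$ (self-adjointness persists because $\Phi$ preserves the real subspace). Using $\sum_i V_i^* V_i = \mathbf{1}$ gives $\|\hat f\|_\infty \leq c$, hence $\|h\|_\infty \leq \|(\Id - \Phi)^{-1}_{|\mathcal{F}}\|_\infty \cdot c = G - c$.

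Second, let $\rho_k$ denote the conditional state (quantum filter) after the $k$-th outcome. Routine computation from \eqref{eq:law.process} gives
\[
\mathbb{E}_\rho[f(X_k)\mid \mathcal{F}_{k-1}] = \tr(\rho_{k-1}\hat f), \qquad \mathbb{E}_\rho[\tr(\rho_k h)\mid \mathcal{F}_{k-1}] = \tr(\rho_{k-1}\Phi(h)).
\]
Setting $E_k := f(X_k) - \tr(\rho_{k-1}\hat f)$ and $D_k := \tr(\rho_k h) - \tr(\rho_{k-1}\Phi(h))$, the Poisson identity $\hat f + \Phi(h) = h$ gives the clean combination $D_k + E_k = f(X_k) + \tr(\rho_k h) - \tr(\rho_{k-1} h)$. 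Telescoping from $k = 2$ to $n$ yields the decomposition
\[
\sum_{k=1}^n f(X_k) = \left[f(X_1) + \tr(\rho_1 h) - \tr(\rho_n h)\right] + \sum_{k=2}^n (D_k + E_k).
\]
Starting the telescope at $k=2$ (absorbing the first measurement into the boundary) is the pivotal bookkeeping step: the residual martingale carries only $n-1$ differences, which is what produces the factor $n-1$ in the theorem rather than $n$.

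Third, I would estimate the two pieces. The boundary term has absolute value at most $c + 2\|h\|_\infty \leq 2G$ via $|f(X_1)|\leq c$ and $|\tr(\rho_j h)|\leq \|h\|_\infty$. Conditional on $\mathcal{F}_{k-1}$ the increment $D_k+E_k$ takes the values $f(i) + \tr(V_i\rho_{k-1} V_i^* h)/\tr(V_i\rho_{k-1} V_i^*) - \tr(\rho_{k-1}h)$ for $i \in I$, so its conditional range is at most $2(c+\|h\|_\infty) \leq 2G$. The Hoeffding-Azuma inequality with predictable range bounds then gives
\[
\PP_\rho\Bigl(\sum_{k=2}^n(D_k+E_k) \geq t\Bigr) \leq \exp\Bigl(-\frac{t^2}{2(n-1)G^2}\Bigr),
\]
and under the hypothesis $n\gamma \geq 2G$ setting $t = n\gamma - 2G \geq 0$ concludes.

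The main obstacle is not technical depth but rather the careful bookkeeping of the martingale decomposition: isolating $f(X_1)$ and the initial-state contribution into the boundary is what simultaneously (i) reduces the number of martingale increments from $n$ to $n-1$ and (ii) preserves the conditional-range bound $2G$ for every remaining increment, both of which are essential to match the constants in the stated bound. Well-definedness of $h$ and invertibility of $(\Id-\Phi)$ on $\mathcal{F}$ follow from hypothesis (H) via Perron-Frobenius.
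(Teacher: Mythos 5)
Your proposal is correct and follows essentially the same route as the paper: the observable $\hat f=\sum_i f(i)V_i^*V_i$, the Poisson equation $(\Id-\Phi)(h)=\hat f$ on $\mathcal{F}$, and your recombined increment $D_k+E_k$ is exactly the paper's martingale difference $g(X_k,\rho_k)-\mathbb{E}_\rho[g(X_k,\rho_k)\mid X_{k-1},\rho_{k-1}]$ for $g(i,\omega)=f(i)+\tr(A_f\omega)$, with the same boundary term of size $2G$ and the same count of $n-1$ increments. The only cosmetic difference is that you invoke Azuma--Hoeffding with predictable ranges directly, whereas the paper bounds the Laplace transform increment by increment via Hoeffding's lemma and then optimizes the Chernoff bound; these yield the identical exponent.
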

With a slight abuse of notation, we write $\|({\rm Id}-\Phi)^{-1}_{|\mathcal{F}}\|_\infty$ for denoting the operator norm of $({\rm Id}-\Phi)^{-1}_{|\mathcal{F}}$ induced by considering the uniform norm on ${\cal F}$. Notice that for $n=1$, the constraint on $\gamma$ implies that $\gamma \geq 2c=2\|f\|_\infty$ and consequently $\PP_\rho ( f(X_1)\geq \gamma) =0$.

\begin{proof} For clarity we split the proof in three steps.

\vspace{2mm}

\noindent
\textbf{1. Poisson equation.}\\
We start by discussing the quantum trajectories Markov process and its associated Poisson equation, which is a key tool in the proof. The pair $(X_n, \rho_n)$ consisting of the $n$-th measurement outcome $X_n\in I$ and the conditional system state $\rho_n$ is a 
Markov chain with 
\[\PP_\rho \left (X_{n+1}=i, \rho_{n+1}=\left. \frac{V_i \omega V_i^*}{\tr(V_i \omega V_i^*)} \right |X_n=j, \rho_n =\omega\right )= \tr(V_i \omega V_i^*).
\]
and initial condition 
$$
X_1 =i \quad {\rm and} \quad 
\rho_1 =  \frac{V_i \rho V_i^*}{\tr(V_i \rho V_i^*)} \text{ with probability } \tr(V_i \rho V_i^*).
$$

Its transition operator $P$ is given by 
\[ Pg(i,\omega):=\sum_{j \in I} g\left (j, \frac{V_j \omega V_j^*}{\tr(V_j \omega V_j^*)} \right )\tr(V_j \omega V_j^*).
\]

The associated Poisson equation is 
\begin{equation}
F(i,\omega)=g(i,\omega)-Pg(i,\omega)
\end{equation}
where $F(i,\omega)$ is a given function, and one is interested in finding $g(i,\omega)$. We will provide an heuristical explanation on how to find a solution of the Poisson equation. Whenever it is well defined, a natural candidate for $g$ is the function $g(i,\omega)=\sum_{n \geq 1} \mathbb{E}[F(X_n,\rho_n)|X_1=i,\rho_1=\omega]=\sum_{n \geq 0}P^{n}(F)(i,\omega)$; indeed
\[g(i,\omega)-Pg(i,\omega)=\sum_{n \geq 0} P^n F(i,\omega)-\sum_{n \geq 1} P^n F(i,\omega)=F(i,\omega).
\]
We now consider a function $F$ which does not depend on the second argument $F(i,\omega) =f (i)$. Using the explicit expression of $P$ we can write
\[g(i,\omega)=f(i)+\tr\left ( \sum_{n \geq 0} \Phi^n({\bf F}) \omega \right )= f(i)+\tr( ({\rm Id} -\Phi)^{-1}({\bf F}) \omega)
\]
where ${\bf F}=\sum_{i \in I} f(i) V_i^*V_i$. With this in mind, the following steps should look very reasonable. We recall that $c:=\|f\|_\infty$.
\begin{lemma} \label{lemm:poiss}
The equation
\begin{equation} \label{eq:poiss}
({\rm Id} -\Phi)(A)={\bf F}  
\end{equation}
admits a solution and all the solutions differ for a multiple of the identity. We denote by $A_f$ the unique solution such that $\tr(\sigma A_f)=0$; we have that $\|A_f\|\leq c \|({\rm Id}-\Phi)^{-1}_{|\mathcal{F}}\|_\infty$ where $\mathcal{F}:=\{x \in \MM: \tr(\sigma x)=0\}$.
\end{lemma}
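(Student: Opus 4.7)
The plan is to show that, under hypothesis (H), the operator $\Id-\Phi$ has kernel exactly equal to $\cc\mathbf{1}$ and range exactly equal to $\mathcal{F}$, and that moreover $\mathbf{F} \in \mathcal{F}$; this will give existence, uniqueness up to multiples of $\mathbf{1}$, and hence the bound.

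First I would observe that $\Phi$ is unital, so $\mathbf{1}$ lies in the kernel of $\Id-\Phi$. By hypothesis (H) the eigenvalue $1$ of $\Phi$ is algebraically simple, so this kernel is exactly $\cc\mathbf{1}$, and we obtain the direct sum decomposition $\MM=\cc\mathbf{1}\oplus \mathcal{F}$ which is preserved by $\Phi$ (for invariance of $\mathcal{F}$: if $\tr(\sigma x)=0$ then $\tr(\sigma\Phi(x))=\tr(\Phi^*(\sigma)x)=\tr(\sigma x)=0$, using $\Phi^*(\sigma)=\sigma$). The range of $\Id-\Phi$ is then a complement of its kernel of the correct dimension and, by the same trace-duality argument, is contained in $\mathcal{F}$; therefore it coincides with $\mathcal{F}$, and the restriction $(\Id-\Phi)|_\mathcal{F}:\mathcal{F}\to\mathcal{F}$ is a linear bijection whose inverse is the pseudoresolvent $(\Id-\Phi)^{-1}_{|\mathcal{F}}$ appearing in the statement.

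Next I would verify that $\mathbf{F}=\sum_{i\in I}f(i)V_i^*V_i$ lies in $\mathcal{F}$. This is a direct computation:
\[
\tr(\sigma\mathbf{F})=\sum_{i\in I}f(i)\tr(\sigma V_i^*V_i)=\sum_{i\in I}f(i)\pi(i)=\pi(f)=0,
\]
using the fact that $\pi(i)=\tr(\sigma V_i^*V_i)$ is the stationary one-time distribution and the assumption $\pi(f)=0$. Consequently the equation $(\Id-\Phi)(A)=\mathbf{F}$ admits the distinguished solution $A_f:=(\Id-\Phi)^{-1}_{|\mathcal{F}}(\mathbf{F})\in\mathcal{F}$, which is the unique solution satisfying $\tr(\sigma A_f)=0$; any other solution differs from $A_f$ by an element of the kernel of $\Id-\Phi$, hence by a multiple of $\mathbf{1}$.

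For the norm bound, I would use the fact that $\|f\|_\infty=c$ translates into the operator inequality $-c\mathbf{1}\le\mathbf{F}\le c\mathbf{1}$ (since $-c\sum_i V_i^*V_i\le\sum_i f(i)V_i^*V_i\le c\sum_i V_i^*V_i$ and $\sum_i V_i^*V_i=\mathbf{1}$ because $\Phi$ is unital), which gives $\|\mathbf{F}\|\le c$. Applying the pseudoresolvent bound on $\mathcal{F}$ with respect to the uniform norm then yields
\[
\|A_f\|=\|(\Id-\Phi)^{-1}_{|\mathcal{F}}(\mathbf{F})\|\le \|(\Id-\Phi)^{-1}_{|\mathcal{F}}\|_\infty\,\|\mathbf{F}\|\le c\,\|(\Id-\Phi)^{-1}_{|\mathcal{F}}\|_\infty,
\]
as required. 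There is no serious obstacle here: the only subtlety is the identification of the range of $\Id-\Phi$ with $\mathcal{F}$, which rests on combining algebraic simplicity of the eigenvalue $1$ (from hypothesis (H)) with the duality $\tr(\sigma\,\cdot\,)$ between the kernel of $\Id-\Phi^*$ and the annihilator of the range of $\Id-\Phi$.
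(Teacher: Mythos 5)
Your proof is correct and follows essentially the same route as the paper: identify $\ker(\Id-\Phi)=\cc\mathbf{1}$ and ${\rm rank}(\Id-\Phi)=\mathcal{F}$ via simplicity of the eigenvalue $1$ and the duality $\tr(\sigma\Phi(x))=\tr(\Phi^*(\sigma)x)$, check $\tr(\sigma\mathbf{F})=\pi(f)=0$, and then apply the pseudoresolvent bound. The only (harmless) difference is in establishing $\|\mathbf{F}\|\leq c$: you use the operator inequality $-c\mathbf{1}\leq\mathbf{F}\leq c\mathbf{1}$ directly from $\sum_i V_i^*V_i=\mathbf{1}$, which is if anything more elementary than the paper's decomposition of $\mathbf{F}$ into positive and negative parts combined with a cited norm inequality.
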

\begin{proof}
Since $\Phi$ is irreducible, ${\rm dim}(\ker({\rm Id}-\Phi))=1$ and $1$ is also the codimension of ${\rm rank}({\rm Id}-\Phi)$; it is easy to see that ${\rm rank}({\rm Id}-\Phi)=\{x \in \MM: \tr(\sigma x)=0\}:={\cal F}$ and equation \eqref{eq:poiss} admits a solution because $\tr(\sigma {\bf F})=\pi(f)=0$. Two solutions of \eqref{eq:poiss} differ for an element of $\ker({\rm Id}-\Phi)=\mathbb{C} \mathbf{1}$. Finally, ${\rm Id}-\Phi$ is a bijection on ${\cal F}$, hence it makes sense to write $A_f = ({\rm Id}-\Phi)^{-1}({\bf F})$ and we get that $\|A_f\| \leq \|({\rm Id}-\Phi)^{-1}_{|\mathcal{F}}\|_\infty \|{\bf F}\|$. We only need to show that $\|{\bf F}
\| \leq c$: notice that ${\bf F}$ can be written as the difference of two positive semidefinite matrices as
\[{\bf F}=\sum_{\substack{i \in I \\ f(i)\geq 0}} f(i) V_i^* V_i - \sum_{\substack{i \in I \\ f(i)< 0}} |f(i)| V_i^* V_i,\]

hence by \cite[Corollary 3.17]{Zh02} we have that $\|{\bf F}\| \leq \|\sum_{ i \in I} |f(i)| V_i^*V_i\|\leq c$.
\end{proof}
It is now an easy computation to verify that if we take $g(i, \omega):=f(i)+ \tr(A_f \omega)$, then 
\[f(i)=g(i, \omega)-Pg(i,\omega).
\]
Moreover $\|g\|_\infty \leq c(1+\|({\rm Id}-\Phi)^{-1}_{|\mathcal{F}}\|_\infty)$.

\bigskip 
\noindent\textbf{2. Hoeffding's inequality.}\\
Thank to the previous step, we can write $\sum_{k=1}^n f(X_k)$ as a martingale with bounded increments and a bounded reminder:
\[\begin{split}
\sum_{k=1}^n f(X_k)&=\sum_{k=1}^n g(X_k,\rho_k)-\mathbb{E}_\rho[g(X_{k+1}, \rho_{k+1})|X_k,\rho_k]\\
&=\sum_{k=2}^{n} \underbrace{g(X_k,\rho_k)-\mathbb{E}_\rho[g(X_{k}, \rho_{k})|X_{k-1},\rho_{k-1}]}_{D_k}\\
&+g(X_1, \rho_1)-\mathbb{E}_\rho[g(X_{n+1}, \rho_{n+1})|X_n,\rho_n].\\
\end{split}
\]
We can easily bound from above the Laplace transform of $\sum_{k=1}^n f(X_k)$ in the following way: for every $u>0$
\[\begin{split}
        \mathbb{E}_\rho[e^{u\sum_{k=1}^n f(X_k)}]& \leq e^{2\|g\|_\infty u} \mathbb{E}_\rho[e^{u\sum_{k=2}^{n-1} D_k} \mathbb{E}_\rho[e^{u D_n}|X_{n-1}, \rho_{n-1},\dots, X_1,\rho_1]]\\
    &\leq e^{2\|g\|_\infty u} e^{\|g\|_\infty^2 u^2/2} \mathbb{E}_\rho[e^{u\sum_{k=2}^{n-1} D_k}]
\end{split}
\]
where in the last equation we used Hoeffding's Lemma \cite[Lemma 2.2]{BLM13}. By induction, we get
\[\mathbb{E}_\rho[e^{u\sum_{k=1}^n f(X_k)}] \leq exp(2\|g\|_\infty u+(n-1)\|g\|_\infty^2 u^2/2).
\]

\bigskip 
\noindent\textbf{3. Fenchel-Legendre transform}\\
As in point 4. of the proof of Theorem \ref{theo:main1}, the final statement follows using Markov inequality and optimizing (remembering that $u>0$). 
\end{proof}

\section{Quantum Counting Processes} \label{sec:QCP}
In this section we consider continuous-time concentration bounds for counting measurements in the output of a quantum Markov process. In this context, alternative concentration bounds  have been recently obtained in \cite{BR21} using functional inequalities. While some concentration bounds involving easily computable quantities have been proved in the diffusive case, for quantum counting processes they are still missing. In this section we show that the same perturbative analysis as in the proof of Theorem \ref{theo:main1} can be used to take a first step towards filling this gap. More precisely, we consider a continuous-time quantum Markov process with GKLS generator \cite{Gardiner2004} given by
\begin{equation}\label{eq:GKLS}
{\cal L}(x)+i[H,x]=\sum_{i \in I} L^*_i x L_i -\frac{1}{2}\{L^*_iL_i,x\}, \quad x \in \MM
\end{equation}
where $[x,y]:=xy-yx$, $\{x,y\}:=xy+yx$, $H \in \MM$ is selfadjoint and $\{L_i\}_{i \in I}$ is a finite collection of operators in $\MM$. It is well known that the family of maps $\Phi_t:=e^{t {\cal L}}$ for $t\geq 0$ is a uniformly continuous semigroup of quantum channels. Analogously to the discrete time case, we need to make an irreducibility assumption.
\begin{equation}\tag{H$^\prime$}
    \text{There exists a unique faithful state $\sigma$ such that ${\cal L}^*(\sigma)=0$.}
\end{equation}
Hypothesis (H$^\prime$) means that $(\Phi_t)_{t \geq 0}$ is irreducible. There are many ways of defining quantum counting processes: we will follow the formulation of Davis and Srinivas \cite{Da76,DS81} and we refer to \cite{BR21} and references therein for their definition thorough quantum stochastic calculus and quantum filtering or how to characterize them using stochastic Schr{\"o}dinger equations. Before doing that, it is convenient to introduce some notation: we define the completely positive map ${\cal J}_i(x)=L_i^*x L_i$ and the semigroup of completely positive maps $e^{t {\cal L}_0}(x)=e^{t G^*} x e^{t G}$, where $G:= i H-\frac{1}{2} \sum_{i \in I} L_i^*L_i$. As the process that we studied in previous sections, also the one we are about to define can be used to model the stochastic process coming from indirect measurements performed on a certain quantum system. We assume that the system is coupled to $|I|$ detectors: when the $i$-th detector clicks, the state of the system evolves according to the map $\rho \mapsto {\cal J}^*_i(\rho)/\tr({\cal J}^*_i(\rho))$, while in between detections the evolution is dictated by $e^{t {\cal L}^*_0}(\rho)$. At time $t$ the instantaneous intensity corresponding to the $i$-th detector is given by $\tr({\cal J}_i^*(\rho_{t-}))$, where $\rho_t$ is the stochastic process describing the evolution of the state of the system. More formally, we can use Dyson's expansion of the semigroup $\Phi_t$ in order to define a proper probability measure on $\Omega_t=\{(t_1,i_1,\dots,t_k,i_k): k \in \nn, 0 \leq t_1 \leq \dots \leq t_k \leq t, i_1,\dots, i_k \in I\}$:
\[\begin{split}
\Phi_t(\rho)&=e^{t{\cal L}^*_0}(\rho)\\
&+\sum_{k=1}^{+\infty} \sum_{i_1,\dots, i_k \in I} \int_0^t \cdots \int_0^{t_2} e^{(t-t_k){\cal L}^*_0}{\cal J}^*_{i_k} \cdots {\cal J}^*_{i_k} e^{t_1{\cal L}^*_0}(\rho) dt_1 \cdots dt_k.
\end{split}
\]
Notice that
\[\Omega_t=\{\emptyset\} \cup \bigcup_{k=1}^{+\infty} I^k \times \{(t_1,\dots,t_k)\in [0,t]^k: t_1 \leq \dots \leq t_k\},
\]
so there is a natural way of endowing it with the $\sigma$-field induce by considering the $\sigma$-field of all the subsets on $\{\emptyset\}$ and $I$, and the Lebesgue $\sigma$-field on $\{(t_1,\dots,t_k)\in [0,t]^k: t_1 \leq \dots \leq t_k\}$. We denote by $d\mu$ the unique measure such that $\mu(\{\emptyset\})=1$ and $\mu(\{(i_1,\dots, i_k)\times B)$ is the Lebesgue measure of $B$ for every $(i_1,\dots, i_k) \in I^k$, $B \subseteq \{(t_1,\dots,t_k)\in [0,t]^k: t_1 \leq \dots \leq t_k\}$. Notice that the following normalization condition holds true
\[\begin{split}
    1&=\tr(\Phi_t(\rho))=\\
    &=\tr(e^{t{\cal L}^*_0}(\rho))+\sum_{k=1}^{+\infty} \int_0^t \cdots \int_0^{t_2} \tr(e^{(t-t_k){\cal L}^*_0}{\cal J}^*_{i_k} \cdots {\cal J}^*_{i_k} e^{t_1{\cal L}^*_0}(\rho)) dt_1 \cdots dt_k,\\
\end{split}
\]
and that the expression below safely defines a probability density on $\Omega_t$:
\begin{equation} \label{eq:density}
\frac{d \PP_\rho}{d \mu}(\emptyset)=\tr(e^{t{\cal L}^*_0}(\rho)), \quad \frac{d \PP_\rho}{d \mu}(t_1,i_1,\dots,t_k,i_k)=\tr(e^{(t-t_k){\cal L}^*_0}{\cal J}^*_{i_k} \cdots {\cal J}^*_{i_k} e^{t_1{\cal L}^*_0}(\rho)).
\end{equation}
We will derive a deviation bound for the random variable $N_i(t)$ that counts the number of clicks of the $i$-th detector until time $t$, i.e.
\[N_i(t)(\emptyset)=0, \quad N_i(t)(t_1,i_1,\dots,t_k,i_k)=\sum_{l=1}^k \delta_{i,i_l}.
\]
We recall that the real part of an operator $\eta:\MM \rightarrow \MM$ is defined as $\Re(\eta):=(\eta+\eta^\dagger)/2$.
\begin{theo} \label{thm:ctbound}
Consider an arbitrary, but fixed index $i\in I$ and  define ${\cal A}:=\Re({\cal L})$ and ${\cal B}=\Re({\cal J}_i)$. Assuming that ${\cal A}$ generates an irreducible quantum Markov semigroup, then for any $t \geq 0$, $\gamma>0$ the following inequality holds true:
\begin{equation} \label{eq:ctbound}
    \PP_\rho \left (\frac{N_i(t)}{t}-m_i \geq  \gamma \right) \leq N_\rho \exp \left (-t \left (\frac{\gamma^2}{2 \left ( m+\frac{2 b^2}{\varepsilon}+  \left (\frac{5\alpha } {\varepsilon} \vee  \frac{5}{2}\right )\gamma \right) } \right ) \right )
\end{equation}
where $m:=\tr(L_i^*L_i \sigma)$ is the intensity of $N_i$ at stationarity, $N_\rho:=\|\sigma^{-\frac{1}{2}}\rho \sigma^{-\frac{1}{2}}\|_2$, $\alpha:= \| {\cal B}\|_2$, $b:=\|{\cal B}(\mathbf{1})\|_2$ and $\varepsilon$ is the spectral gap of ${\cal A}$.
\end{theo}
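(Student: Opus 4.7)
The plan is to mirror the four-step strategy of Theorem~\ref{theo:main1}, with the discrete symmetrization $\Psi_u = \Phi_u^\dagger\Phi_u$ replaced by working directly with the real part of the generator $\mathcal{A}_u = \Re(\mathcal{L}_u)$. First, I would identify the moment generating function via the Dyson series \eqref{eq:density}: inserting a factor $e^u$ for each occurrence of $\mathcal{J}_i^*$ in the expansion is equivalent to replacing $\mathcal{J}_i^*$ with $e^u\mathcal{J}_i^*$, so
\begin{equation*}
\mathbb{E}_\rho[e^{uN_i(t)}] = \tr\bigl(\rho\, e^{t\mathcal{L}_u}(\mathbf{1})\bigr), \qquad \mathcal{L}_u := \mathcal{L} + (e^u-1)\mathcal{J}_i.
\end{equation*}
Rewriting the pairing as a KMS inner product and applying Cauchy--Schwarz exactly as in step~1 of Theorem~\ref{theo:main1} yields $\mathbb{E}_\rho[e^{uN_i(t)}] \leq N_\rho\,\|e^{t\mathcal{L}_u}\|_2$.

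Next I would bound the semigroup norm by the top eigenvalue of the symmetrized generator. Differentiating $\|e^{t\mathcal{L}_u}x\|_2^2$ and using $\Re(\mathcal{L}_u) = \mathcal{A}_u = \mathcal{A} + (e^u-1)\mathcal{B}$ gives $\|e^{t\mathcal{L}_u}\|_2 \leq e^{t\lambda(u)}$, where $\lambda(u) := \lambda_{\max}(\mathcal{A}_u)$. Because $\mathcal{A}$ generates an irreducible QMS, $\lambda(0) = 0$ is algebraically simple with eigenvector $\mathbf{1}$ and spectral gap $\varepsilon$; since $\mathcal{A}_u$ is KMS-self-adjoint for real $u$ and an analytic perturbation of $\mathcal{A}$, Kato's theory delivers a convergent expansion $\lambda(u) = \sum_{k\geq 1}\lambda^{(k)}u^k$ in a neighbourhood of the origin.

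The bulk of the work is bounding this series. Viewing the perturbation as $\mathcal{A}_u - \mathcal{A} = \sum_{k\geq 1}(u^k/k!)\,\mathcal{B}$, every Taylor coefficient equals $\mathcal{B}$, and Kato's explicit formula reads
\begin{equation*}
\lambda^{(k)} = \sum_{p=1}^k \frac{(-1)^p}{p}\sum_{\substack{\nu_1+\dots+\nu_p=k\\ \mu_1+\dots+\mu_p=p-1}} \frac{\TR\bigl(\mathcal{B} S^{(\mu_1)}\cdots\mathcal{B} S^{(\mu_p)}\bigr)}{\nu_1!\cdots\nu_p!}.
\end{equation*}
The key is to separate $p=1$ from $p\geq 2$: the $p=1$ contributions sum exactly to $m(e^u-1)$, since $\TR(\mathcal{B} S^{(0)}) = -\langle\mathbf{1},\mathcal{B}(\mathbf{1})\rangle = -m$, which produces the Gaussian leading term $um$ together with the benign correction $m(e^u-1-u)$. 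For $p\geq 2$, cyclicity lets me place $S^{(0)} = -|\mathbf{1}\rangle\langle\mathbf{1}|$ at the end of each trace, and then the continuous-time analogue of Lemma~\ref{lem:normub}---pairing the two outer $\mathcal{B}$'s with $\mathbf{1}$ via $\|\mathcal{B}(\mathbf{1})\|_2 \leq b$ and the inner ones via $\|\mathcal{B}\|_2 \leq \alpha$---gives $|\langle\mathbf{1},\mathcal{B} S^{(\mu_1)}\cdots\mathcal{B}(\mathbf{1})\rangle| \leq b^2\alpha^{p-2}\varepsilon^{-(p-1)}$. Summing over compositions and then over $p\geq 2$ in parallel with \eqref{eq:estrk} organises the result as a geometric series in $\alpha(e^u-1)/\varepsilon$; converting the factors $(e^u-1)$ into powers of $u$ produces the Bernstein-form estimate
\begin{equation*}
\lambda(u) \leq um + \frac{u^2\bigl(m+2b^2/\varepsilon\bigr)}{2\bigl(1 - (5\alpha/\varepsilon \vee 5/2)\,u\bigr)}.
\end{equation*}

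The final step is the Chernoff/Fenchel--Legendre optimisation of step~4 of Theorem~\ref{theo:main1}: $\PP_\rho(N_i(t)/t - m \geq \gamma) \leq N_\rho \exp\bigl(t[\lambda(u) - u(m+\gamma)]\bigr)$, and minimising the right-hand side over admissible $u \geq 0$ yields \eqref{eq:ctbound}. The delicate point of this plan is carrying out the perturbative bookkeeping cleanly enough to land on the precise constants: one must not absorb the $p=1$ contribution into the $p\geq 2$ remainder (otherwise the Poissonian variance $m$ is lost in the quadratic coefficient, leaving only the Kubo response $b^2/\varepsilon$), and the combinatorial counts $\binom{2p-2}{p-1}$ together with the factor $1/p$ in Kato's formula must be summed with the same care as the $5^{k-2}$ bound imported from \cite{Le98}, so that the \emph{range} constant emerges as $5\alpha/\varepsilon \vee 5/2$ rather than something substantially worse.
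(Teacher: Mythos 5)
Your proposal follows essentially the same route as the paper's proof: the deformed generator $\mathcal{L}_u = \mathcal{L} + (e^u-1)\mathcal{J}_i$ for the Laplace transform, the Lumer--Phillips bound $\|e^{t\mathcal{L}_u}\|_2 \leq e^{t\lambda_{\max}(\mathcal{A}_u)}$, Kato's expansion of the top eigenvalue of $\mathcal{A}_u = \mathcal{A} + (e^u-1)\mathcal{B}$ with the $p=1$ terms resummed to the Poissonian $m(e^u-1)$ and the $p\geq 2$ terms controlled by $b$, $\alpha$, $\varepsilon$, and a final Chernoff optimisation (which the paper delegates to \cite[Lemma 9]{BQJ18}). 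The argument and the resulting constants match; no gaps.
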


Once again, with a slight abuse of notation, we denote by $\|{\cal B}\|_2$ the operator norm of ${\cal B}$ induced by considering $\MM$ with the KMS-norm. We remark that a sufficient condition for ${\cal A}$ to generate an irreducible quantum Markov semigroup is that $[H,\sigma]=0$. Indeed, both ${\cal L}$ and ${\cal L}^\dagger$ generate irreducible quantum Markov semigroups with faithful invariant state $\sigma$, hence $\sigma$ is a faithful invariant state for ${\cal A}$ too. Moreover, we can easily compute the GKLS form of ${\cal L}^\dagger$ induced by the one of ${\cal L}$ in equation \eqref{eq:GKLS}, which reads
\[{\cal L}^\dagger(x)=i[\sigma^{-\frac{1}{2}} H \sigma^{\frac{1}{2}},x]+\sum_{i \in I} L_i^{\prime*} x L^\prime_i -\frac{1}{2}\{L^{\prime*}_iL^\prime_i,x\}, \quad x \in \MM,
\]
where we omit the exact expression of $L_i^\prime$'s. Since we assumed that $[H,\sigma]=0$, $\sigma^{-\frac{1}{2}} H \sigma^{\frac{1}{2}}=H$. Putting together the GKLS forms of ${\cal L}$ and ${\cal L}^\dagger$, we can express ${\cal A}=({\cal L}+{\cal L}^\dagger)/2$ as
\[\begin{split}
    i[H,x]&+\frac{1}{2} \left (\sum_{i \in I} L^{\prime*}_i x L^\prime_i -\frac{1}{2}\{L^{\prime*}_iL^\prime_i,x\} \right .\\
    &+\left .\sum_{i \in I} L_i^* x L_i -\frac{1}{2}\{L_i^*L_i,x\} \right ), \quad x \in \MM.\\
\end{split}
\]
The irreducibility of a quantum Markov semigroup with faithful invariant state is equivalent to the fact that the commutant of the Hamiltonian, the noise operators and their adjoints is equal to $\mathbb{C} \mathbf{1}$ (\cite[Theorem 7.2]{Wo12}), hence we can conclude that ${\cal A}$ generates an irreducible quantum Markov semigroup: indeed, we have that
\[\{H,L_i,L_i^*, L_i^\prime, L_i^{\prime *}\}^\prime = \{H,L_i,L_i^*\}^\prime \cap \{H,L_i^\prime, L_i^{\prime *}\}^\prime=\mathbb{C}\mathbf{1}.
\]
\begin{proof}
Using the explicit expression of the density of $\PP_\rho$ given in equation \eqref{eq:density}, one can show that the Laplace transform of $N_i(t)$ can be expressed in terms of a smooth perturbation of the Lindblad generator:
\[\mathbb{E}_\rho[e^{u N_i(t)}]=\tr(\rho e^{t{\cal L}_u}(\mathbf{1}))
\]
where ${\cal L}_u(\cdot)={\cal L}(\cdot)+(e^u-1){\cal J}_i(\cdot)$; we refer to Appendix A in \cite{BR21} for a proof using quantum stochastic calculus. Using Lumer-Phillips theorem, we get to
\[\mathbb{E}_\rho[e^{uN_i(t)}] \leq  N_\rho e^{t r(u)} \leq  N_\rho e^{t |r(u)|},
\]
where $r(u)$ is the largest eigenvalue of ${\cal A}_u:=\Re({\cal L}_u)$. Notice that ${\cal A}_u$ is a smooth perturbation of ${\cal A}:=\Re({\cal L})$:
\[{\cal A}_u= {\cal A}+\sum_{k\geq 1} \frac{u^k}{k!} {\cal B}
\]
and ${\cal B}=\Re({\cal J}_i)$. From perturbation theory we get that if ${\cal A}$ is the generator of an irreducible quantum Markov semigroup with spectral gap equal to $\varepsilon$ and if we call $\alpha:= \| {\cal B}\|_2$, we can expand $r(u)=\sum_{k \geq 1}r^{(k)} u^k$ around $u=0$ for $u <(2\alpha \varepsilon^{-1}+1)^{-1}$ and the coefficients $r^{(k)}$'s are provided by the following expression:
\[r^{(k)}=\sum_{p=1}^k \frac{(-1)^p}{p} \sum_{\substack{\nu_1+\dots +\nu_p=k,\, \nu_i\geq 1\\
\mu_1+\dots +\mu_p=p-1, \, \mu_j\geq 0}} \frac{1}{\nu_1! \cdots \nu_p!}\TR({\cal B} S^{(\mu_1)}\cdots {\cal B}S^{(\mu_p)}),
\]
$S^{(0)}=-\ket{\mathbf{1}}\bra{\mathbf{1}}$ and for $\mu\geq 1$, $S^{(\mu)}$ is the $\mu$-th power of
\[S^{(1)}=({\cal A}+\ket{\mathbf{1}}\bra{\mathbf{1}})^{-1}-\ket{\mathbf{1}}\bra{\mathbf{1}}.\]
Notice that $\|S^{(\mu)}\|_2=\varepsilon^{-\mu}$ for $\mu\geq 1$.
This time we get that
\[r^{(1)}=\langle \mathbf{1}, {\cal B}(\mathbf{1}) \rangle =\tr(\sigma L_i^*L_i):=m,\]
which is the intensity of $N_i$ in the stationary regime. For the other terms we need to introduce the notation $b:=\|{\cal B}(\mathbf{1})\|_2$; then we get that
\[|r^{(2)}|\leq \frac{m}{2}+ |\langle {\cal B}(\mathbf{1}), S^{(1)} {\cal B}(\mathbf{1}) \rangle \leq  \frac{m}{2}+ \frac{b^2}{\varepsilon}
\]
and, for $k \geq 3$,
\[|r^{(k)}|\leq \begin{cases}\frac{m}{k!}+ \frac{b^2}{5 \alpha}\left ( \frac{5\alpha}{\varepsilon} \right )^{k-1} & \text{ if } 2\alpha/\varepsilon \geq 1 \\
\frac{m}{k!}+ \frac{2b^2}{ 5 \varepsilon} \left ( \frac{5}{2} \right )^{k-1}& \text{ o.w.}\\ \end{cases},
\]
where we used that $\nu_1! \cdots \nu_p! \geq 2^{k-p}$ and for $p \geq 2$
\[\left |\frac{\TR({\cal B} S^{(\mu_1)}\cdots {\cal B}S^{(\mu_p)})}{\nu_1! \cdots \nu_p!} \right | \leq \frac{b^2}{\alpha} \left (\frac{2 \alpha}{\epsilon} \right )^{p-1}\frac{1}{2^{k-1}} \leq \begin{cases} \frac{b^2}{\alpha} \left (\frac{ \alpha}{\epsilon} \right )^{k-1} & \text{ if } \frac{2 \alpha}{\epsilon} \geq 1 \\ \frac{b^2}{\alpha} \left (\frac{2 \alpha}{\epsilon} \right )\frac{1}{2^{k-1}} & \text{ o.w. } \end{cases}
\]
Wrapping up everything, we obtain that
\begin{equation} \label{eq:estru}
|r(u)| \leq m(e^{u}-1) + \frac{b^2}{\varepsilon}u^2 \left (1-\left (\frac{5\alpha }{\varepsilon} \vee \frac{5}{2}\right )u\right )^{-1}
\end{equation}
Hence, one gets that for every $u>0$
\begin{equation} \label{eq:ctlt}
\begin{split}
    \PP_\rho \left (\frac{N_t}{t}-m \geq \gamma \right) \leq N_\rho \exp \left (-t \left (\gamma u-  m(e^u-u-1)-\frac{b^2}{\varepsilon}u^2 \left (1-\left (\frac{5\alpha }{\varepsilon} \vee \frac{5}{2}\right )u \right )^{-1} \right ) \right )
\end{split}
\end{equation}
Notice that the term $e^{tm(e^u-1)}$ in the r.h.s. of equation \eqref{eq:ctlt} is exactly the Laplace transform of a Poisson process with intensity $m$. The extra terms come from the correlations between the process $N_t$ at different times and the convergence towards the stationary regime. The statement follows  from the same computations as in \cite[Lemma 9]{BQJ18}.
\end{proof}
Below we provide some simple bounds for some of the quantities appearing in inequality \eqref{eq:ctbound}.
\begin{itemize}
    \item Using triangular inequality, we get
    \[b:=\|{\cal B}(\mathbf{1})\|_2\leq \frac{\|L_i^*L_i\|_2+\|(\sigma^{\frac{1}{2}} L^*_i\sigma^{-\frac{1}{2}})^*(\sigma^{\frac{1}{2}} L^*_i\sigma^{-\frac{1}{2}})\|_2}{2}.\]
    Then, we can apply Cauchy-Schwarz inequality:
    \[\|L_i^*L_i\|_2=\tr(\sigma^{\frac{1}{2}}L_i^*L_i \sigma^{\frac{1}{2}}L_i^*L_i)^{\frac{1}{2}} \leq \tr(\sigma (L_i^*L_i)^2)^{\frac{1}{2}} \leq \|L^*_i L_i\|^{\frac{1}{2}} m^{\frac{1}{2}}
    \]
    and
    \[\|(\sigma^{\frac{1}{2}} L^*_i\sigma^{-\frac{1}{2}})^*(\sigma^{\frac{1}{2}} L^*_i\sigma^{-\frac{1}{2}})\|_2\leq \|(\sigma^{\frac{1}{2}} L^*_i\sigma^{-\frac{1}{2}})^*(\sigma^{\frac{1}{2}} L^*_i\sigma^{-\frac{1}{2}})^*\|^{\frac{1}{2}} m^{\frac{1}{2}}.
    \]
    \item Notice that
    \[\alpha:=\|{\cal B}\|_2\leq \|L_i^* \cdot L_i\|_2=\|(\sigma^{\frac{1}{2}} L^*_i\sigma^{-\frac{1}{2}})^*\cdot (\sigma^{\frac{1}{2}} L^*_i\sigma^{-\frac{1}{2}})\|_2.\]
    Let us denote by ${\rm Sp}(\sigma)$ the spectrum of $\sigma$; since
    \[\begin{split}
    &L_i \sigma L^*_i \leq L_i L_i^* \leq \frac{\|L_i L_i^*\|}{\min({\rm Sp}(\sigma))} \sigma,   \\
    &(\sigma^{\frac{1}{2}} L^*_i\sigma^{-\frac{1}{2}}) \sigma (\sigma^{\frac{1}{2}} L^*_i\sigma^{-\frac{1}{2}})^* \leq \frac{\|(\sigma^{\frac{1}{2}}L^*_i\sigma^{-\frac{1}{2}}) (\sigma^{\frac{1}{2}} L^*_i\sigma^{-\frac{1}{2}})^*\| \sigma }{\min({\rm Sp}(\sigma))},
    \end{split}
    \]
    Lemma 1.3 in \cite{OP04} implies that
    \[\|{\cal B}\|_2 \leq \min\{\|L_i L_i^*\|,\|(\sigma^{\frac{1}{2}}L^*_i\sigma^{-\frac{1}{2}}) (\sigma^{\frac{1}{2}} L^*_i\sigma^{-\frac{1}{2}})^*\|\}/\min({\rm Sp}(\sigma)).\]
\end{itemize}
We remark that the proof of Theorem \ref{thm:ctbound} works also for the more general case of the counting processes $Y_j$ defined in \cite{BR21}, which correspond to a change of basis before detection.


\section{Extensions and Applications} \label{sec:e&a}


\subsection{Concentration bounds for fluxes of classical Markov chains} \label{ss:cMC}

Let us consider a classical Markov chain $(X_{n})_{n\geq 1}$ with finite state space $E$ and transition matrix $P=(p_{xy})_{x,y \in E}$ which is irreducible and admits a unique invariant measure $(\sigma_x)_{x \in E}$. Instead of looking at functions of the state of the Markov chain, one may be interested in having concentration bounds for empirical fluxes, i.e. empirical means of functions of the jumps $f:E^2 \rightarrow \mathbb{R}$ (for instance for estimating jump probabilities). Having a wide range of concentration bounds for the empirical mean of functions of the state of a Markov chain, a first natural attempt is considering fluxes as functions of the state of the doubled up Markov chain $(\tilde{X}_n)_{n \geq 1}$, which is the Markov chain with state space $\tilde{E}:=\{(x,y) \in E^2:p_{xy}>0\}$ and with transition matrix given by $\tilde{P}=(\tilde{p}_{(x,y)(z,w)})$, with $\tilde{p}_{(x,y)(z,w)}=\delta_{y,z}p_{yw}$; if $P$ is irreducible, then so is $\tilde{P}$ and its unique invariant distribution is the measure $(\sigma_x p_{xy})_{x,y}$. However, in general the matrix $\tilde{P}$ behaves in a less nice way than $P$: for instance, Theorems 1.1 and 3.3 in \cite{Le98} can never be applied to the doubled up Markov chain for non-trivial models, since both $\tilde{P}$ being selfadjoint or $\tilde{P}^\dagger \tilde{P}$ being irreducible imply that $E$ is a singleton. Remarkably, we can carry out the proofs of Theorems \ref{theo:main1} and \ref{theo:main2} in this classical setting and they provide concentration inequalities for empirical fluxes involving the matrix $P$ instead of $\tilde{P}$: this reflects the fact that $P$ already contains all the information about jumps. Let $\nu$ be any initial probability measure on $E$; for the Bernstein bound it is enough to notice that for every $u>0$
\[\mathbb{E}_\nu[e^{u \sum_{k=1}^{n} f(X_k,X_{k+1})}]=\nu \cdot P_u^{n} \cdot \mathbf{1}
\]
where $(P_u)_{xy}=p_{xy}e^{uf(x,y)}$ and $\mathbf{1}$ is the constant function on $E$ equal to $1$. The proof of Theorem \ref{theo:main1} can be carried out replacing $\MM$ with $\ell^\infty(E):=\{f:E \rightarrow \cc\}$ and $\Phi$ with $P$; in this particular setting $\ell^\infty(E)$ can be turned into a Hilbert space with respect to the inner product
\[\langle h, g \rangle=\sum_{x \in E}\sigma_x \overline{h}(x)g(x), \quad h,g \in \ell^\infty(E)
\]
The Bernstein-type inequality for fluxes reads as follows. We recall that $\tilde{E}:=\{(x,y) \in E^2:p_{xy}>0\}$, $\pi(x,y)=\sigma_xp_{xy}$.
\begin{prop}
    \label{prop1}
    If $Q:=P^\dagger P$ is irreducible, then for every $f:\tilde{E} \rightarrow \rr$ such that $\pi(f)=0$, $\pi(f^2)=b^2 $ and $\| f \|_\infty=c$ for some $b,c>0$ and for every $\gamma > 0$, $n \geq 1$
        \begin{equation}
        \PP_\nu \left (\frac{1}{n} \sum_{k=1}^{n} f(X_k,X_{k+1})\geq\gamma \right) \leq N_\nu \exp \left ( -n \frac{\gamma^2 \varepsilon}{6b^2}h\left ( \frac{10c \gamma}{3b^2}\right )\right )
        \end{equation}
    where $\varepsilon$ is the spectral gap of $Q$, $N_\nu:=\|\frac{d\nu}{d\sigma}\|_2$ and $h(x)=(\sqrt{1+x}+x/2+1)^{-1}$.
\end{prop}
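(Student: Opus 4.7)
The plan is to transpose verbatim the four-step architecture of the proof of Theorem \ref{theo:main1}, replacing $\MM$ endowed with the KMS inner product by $\ell^\infty(E)$ endowed with the $\sigma$-weighted inner product recalled above, and the channel $\Phi$ by the transition matrix $P$. By the Markov property the Laplace transform can be written as
\begin{equation*}
\mathbb{E}_\nu\bigl[e^{u\sum_{k=1}^{n}f(X_k,X_{k+1})}\bigr]=\bigl\langle d\nu/d\sigma,\,P_u^{\,n}\mathbf{1}\bigr\rangle\le N_\nu\,\|P_u\|_2^{\,n},
\end{equation*}
with $(P_u)_{xy}=p_{xy}e^{u f(x,y)}$, and $\|P_u\|_2^{2}=\|P_u^\dagger P_u\|_2 =: r(u)$ is the Perron eigenvalue of $Q_u:=P_u^\dagger P_u$. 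Since each $Q_u$ is entry-wise nonnegative and inherits the sparsity pattern of $Q$, classical Perron--Frobenius combined with the assumed irreducibility of $Q$ guarantees that $r(u)$ is a simple eigenvalue of $Q_u$ for every real $u$, so that analytic perturbation theory is available.

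Expanding $P_u=\sum_{k\ge0}(u^k/k!)P^{(k)}$ with $(P^{(k)})_{xy}=p_{xy}f(x,y)^{k}$ yields $Q_u=\sum_{k\ge0}(u^k/k!)Q^{(k)}$ with $Q^{(k)}=\sum_{l=0}^{k}\binom{k}{l}P^{(l)\dagger}P^{(k-l)}$, and Kato's formula gives $r(u)=1+\sum_{k\ge 1}r^{(k)}u^{k}$ with exactly the same expression for the coefficients as in step~2 of the proof of Theorem~\ref{theo:main1}. The heart of the argument is then to reprove the five estimates of Lemma \ref{lem:normub} in this commutative setting. The classical counterpart of \cite[Lemma 1.3]{OP04} reduces to the elementary fact that an entry-wise nonnegative matrix $M$ with $M^\top\sigma\le\sigma$ is an $\ell^{2}(\sigma)$-contraction; splitting $P^{(m)\dagger}P^{(k)}/c^{m+k}$ as the difference of two such matrices according to the sign of $f(z,y)^{m}f(z,w)^{k}$ and invoking Lemma \ref{lem:KMSnormineq} delivers point~1. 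Points~3--5 go through with the operator-valued Jensen inequality used in the quantum proof replaced by scalar Cauchy--Schwarz and Jensen, applied to the probability measure $\tilde\pi(z,y,w)=\sigma_{z}p_{zy}p_{zw}$ on transition triples, whose two-coordinate marginals $(z,y)$ and $(z,w)$ are both equal to $\pi$.

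The main point requiring care is the bookkeeping in point~3: unlike in the quantum case $f$ is defined on pairs $(x,y)\in\tilde E$, so $P^{(m)\dagger}P^{(k)}(\mathbf{1})$ involves $f$ evaluated at two pairs sharing the source vertex, and one has to verify $\tilde\pi(f(z,y)^{2m}f(z,w)^{2k})^{1/2}\le c^{m+k-1}b$ whenever $m+k\ge 1$. This is immediate: when one of $m,k$ vanishes, $\|f\|_\infty\le c$ pulls out $c^{k}$ or $c^{m}$ and leaves $\pi(f^{2})^{1/2}\le b$; when both are strictly positive one pulls out $c^{m+k-2}$ and is again left with $\pi(f^{2})^{1/2}\le b$. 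Once Lemma~\ref{lem:normub} is established in this form, the bounds $|r^{(2)}|\le 6b^{2}/\varepsilon$ and $|r^{(k)}|\le (2b^{2}/5c)(10c/\varepsilon)^{k-1}$ for $k\ge 3$, the exponential majorant $r(u)\le\exp\bigl(6b^{2}u^{2}/(\varepsilon-10cu)\bigr)$ valid on $[0,\varepsilon/(10c))$, and the final Chernoff bound followed by Fenchel--Legendre optimisation in $u$ all reproduce themselves verbatim, yielding the stated inequality with the same constants.
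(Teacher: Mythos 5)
Your proposal is correct and follows exactly the route the paper intends: the paper's own proof of Proposition \ref{prop1} consists precisely of the instruction to rerun the four steps of Theorem \ref{theo:main1} with $\MM$ replaced by $\ell^\infty(E)$ with the $\sigma$-weighted inner product and $\Phi$ by $P$, and you have correctly identified the one point needing care, namely that $P^{(m)\dagger}P^{(k)}$ sums over transitions sharing a source vertex, handled via the measure $\tilde\pi(z,y,w)=\sigma_z p_{zy}p_{zw}$ with both pair marginals equal to $\pi$. The only blemish is a harmless slip in the mixed case of point 3, where pulling out $c^{m+k-2}$ leaves $\tilde\pi(f(z,y)^2f(z,w)^2)^{1/2}\le c\,b$ rather than $\pi(f^2)^{1/2}$, which still yields the stated bound $c^{m+k-1}b$.
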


Regarding the Hoeffding-type inequality, one just needs to notice that, calling $a_f \in \ell^\infty(E)$ the unique centered solution of
\[({\rm Id}-P)(h)(x)=\sum_{y \in E}p_{xy}f(x,y), \quad h \in \ell^\infty(E),
\]
for every $k \geq 1$, we can write $f(X_k,X_{k+1})=g(X_k,X_{k+1})-\mathbb{E}_\nu[g(X_k,X_{k+1})|X_{k-1},X_k]$, where $g(x,y)=f(x,y)+a_f(y)$ (once again we assume that $\pi(f)=0$). Repeating the same steps as in the proof of Theorem \ref{theo:main2}, we can obtain the following.
\begin{prop}
    \label{prop2}
    For every $f:\tilde{E} \rightarrow \mathbb{R}$ such that $\pi(f)=0$ and $\|f\|_\infty=c$ for some $c >0$, then for every $\gamma > 0$
        \begin{equation}
            \PP_\nu \left (\frac{1}{n} \sum_{k=1}^{n} f(X_k,X_{k+1})\geq \gamma \right) \leq \exp \left ( - \frac{(n \gamma -2G)^2}{2(n-1)G^2}\right ) \text{ for } n \gamma \geq 2G,
        \end{equation}
    where $G=(1+\|({\rm Id}-P)^{-1}_{|\mathcal{F}}\|_\infty)c$ and $\mathcal{F}:=\{h \in \ell^\infty(E): \sigma(h)=0\}$.
\end{prop}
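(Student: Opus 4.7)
The plan is to transcribe the proof of Theorem~\ref{theo:main2} into the classical Markov-chain setting, with one structural observation that lets us avoid working with the doubled-up chain $\tilde P$: the increment $f(X_k,X_{k+1})$ has conditional mean $F(X_k):=\sum_{y}p_{X_k,y}f(X_k,y)$ which depends only on the state at time $k$, so the relevant Poisson equation can be set up on $\ell^\infty(E)$ rather than on $\ell^\infty(\tilde E)$, keeping $P$ (and hence $\|(\mathrm{Id}-P)^{-1}_{|\mathcal F}\|_\infty$) in the final bound.

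First I would solve $(\mathrm{Id}-P)(a_f)=F$. Since $\sigma(F)=\pi(f)=0$ one has $F\in\mathcal F$; irreducibility of $P$ makes $\mathrm{Id}-P$ a bijection of $\mathcal F$ onto itself, yielding a unique $a_f\in\mathcal F$. The crude estimate $\|F\|_\infty\leq c$ gives $\|a_f\|_\infty\leq c\,\|(\mathrm{Id}-P)^{-1}_{|\mathcal F}\|_\infty=G-c$, so that $g(x,y):=f(x,y)+a_f(y)$ satisfies $\|g\|_\infty\leq G$.

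Next, the Markov property combined with the Poisson equation gives $\mathbb{E}_\nu[g(X_k,X_{k+1})\mid X_1,\dots,X_k]=F(X_k)+(Pa_f)(X_k)=a_f(X_k)$. Hence $D_k:=g(X_k,X_{k+1})-a_f(X_k)=f(X_k,X_{k+1})+a_f(X_{k+1})-a_f(X_k)$ is a martingale difference whose conditional range is at most $2\|g\|_\infty\leq 2G$. Telescoping the $a_f$ contributions produces
\[
\sum_{k=1}^{n}f(X_k,X_{k+1})=g(X_1,X_2)-a_f(X_{n+1})+\sum_{k=2}^{n}D_k,
\]
with boundary term bounded by $\|g\|_\infty+\|a_f\|_\infty\leq 2G$ and exactly $n-1$ martingale increments. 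Iterated conditioning and Hoeffding's lemma then yield, for every $u>0$,
\[
\mathbb{E}_\nu\!\left[\exp\!\Bigl(u\sum_{k=1}^{n}f(X_k,X_{k+1})\Bigr)\right]\leq\exp\!\Bigl(2Gu+(n-1)\tfrac{G^2u^2}{2}\Bigr),
\]
and Markov's inequality, optimised at $u^\star=(n\gamma-2G)/((n-1)G^2)$ (which is non-negative precisely when $n\gamma\geq 2G$), produces the claimed exponent $-(n\gamma-2G)^2/(2(n-1)G^2)$.

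The only real obstacle is conceptual rather than technical: one must resist the temptation to apply the classical Hoeffding-type bound of \cite{GO02} directly to the doubled chain $\tilde P$, whose pseudo-resolvent $(\mathrm{Id}-\tilde P)^{-1}$ is typically badly behaved (as already noted in the discussion following Theorem~\ref{theo:main1}). Once one recognises that conditional expectations of $f(X_k,X_{k+1})$ reduce the Poisson equation to $\ell^\infty(E)$, the remainder of the argument is a direct transcription of Steps~1--3 of the proof of Theorem~\ref{theo:main2} with $\MM$ replaced by $\ell^\infty(E)$ and the KMS inner product by $\langle h,g\rangle_\sigma=\sum_x\sigma_x\overline{h(x)}g(x)$.
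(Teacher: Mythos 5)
Your proposal is correct and follows essentially the same route as the paper: the paper likewise solves the Poisson equation $({\rm Id}-P)(a_f)(x)=\sum_{y}p_{xy}f(x,y)$ on $\ell^\infty(E)$ (precisely so that $P$, rather than the doubled-up matrix $\tilde P$, enters the constant $G$), sets $g(x,y)=f(x,y)+a_f(y)$, and then repeats the martingale-plus-boundary decomposition, Hoeffding's lemma and the Chernoff optimisation from the proof of Theorem \ref{theo:main2}. Your verification that $\mathbb{E}_\nu[g(X_k,X_{k+1})\mid X_1,\dots,X_k]=a_f(X_k)$, the bound $\|g\|_\infty\leq G$, and the optimisation at $u^\star=(n\gamma-2G)/((n-1)G^2)$ all match the intended argument.
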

In the same spirit, one can obtain new bounds for empirical fluxes of continuous time Markov chains too; we aim at studying some applications of such bounds in a future work.
\subsection{Reducible Quantum Channels} \label{sub:redqc}
The study of some physically relevant models requires hypothesis (H) to be dropped: for instance, in the case of non-demolition measurements (\cite{BBB13,BB11}), where the interaction between the system and the ancillas is such that it preserves some non-degenerate observable $A=\sum_{j=1}^d \alpha_j \ket{\alpha_j}\bra{\alpha_j} \in \MM $, hence it is of the form
\[U=\sum_{j=1}^d \ket{\alpha_j}\bra{\alpha_j} \otimes U_j \in M_{d \times \dim(\hh_a)}(\cc)
\]
for some collection of unitary operators $(U_j)$ acting on the ancillary system (which is described by the Hilbert space $\hh_a$). In this case, Kraus operators of $\Phi^*(\rho)=\tr_{\hh_a}(U \rho \otimes \ket{\chi}\bra{\chi} U^*)$ induced by the measurement on the ancilla corresponding to the orthonormal basis $\{\ket{i}\}_{i \in I}$ are given by
\[V_i=\sum_{j=1}^d \bra{i} U_j\ket{ \chi} \ket{\alpha_j}\bra{\alpha_j}, \quad i \in I.
\]
It is easy to see that $\Phi$ is positive recurrent, but it is not irreducible anymore: any $\ket{\alpha_j} \bra{\alpha_j}$ is an invariant state for $\Phi$. If we do not assume that $\Phi$ is irreducible, the first issue we need to take into account is that in general $\frac{1}{n}\sum_{k=1}^{n}f(X_k)$ does converge to a non-trivial random variable: for instance, in the case of non-demolition measurements, one can show that there exists a random variable $\Gamma$ taking values in $\{1,\dots, d\}$ such that
\[\lim_{n \rightarrow +\infty}\frac{1}{n}\sum_{k=1}^{n}f(X_k)=m_\Gamma \quad \PP_\rho \text{-a.s.,}
\]
where $m_j=\sum_{i \in I} f(i) |\bra{i}U_j \ket{\chi}|^2$. What one can do, then, is to upper bound the probability that $\frac{1}{n}\sum_{k=1}^{n}f(X_k)$ deviates from $m_\Gamma$. Leaving aside non-demolition measurements, we will present the result in the case of general quantum Markov chains assuming only the existence of a faithful invariant state for $\Phi^*$. It is well known (\cite{BN12}) that we can always find a (non-unique) decomposition 
\begin{equation} \label{eq:dec}
\cc^d= \bigoplus_{j \in J}\hh_j
\end{equation}
such that
\begin{enumerate}
    \item $\hh_j \perp \hh_{j^\prime}$ if $j \neq j^\prime$ (orthogonality),
    \item if ${\rm supp}(\rho)\subseteq \hh_j$, then ${\rm supp}(\Phi^*(\rho))\subseteq \hh_j$ (invariance),
    \item $\Phi^*$ restricted to $p_{\hh_j}\MM p_{\hh_j}$ is irreducible with unique invariant state $\sigma_j$ (minimality),
\end{enumerate}
where we used the notation $p_{\hh_j}$ to denote the orthogonal projection onto $\hh_j$; one can show that $2.$ is equivalent to
\[2^\prime. \qquad [V_i,p_{\hh_j}]=0 \quad \forall i \in I, j \in J.
\]
If the system initially starts in a state $\rho$ supported on only one of the $\hh_j$'s, it is easy to see that $2.$ and $3.$ imply that we are back to the case of an irreducible quantum channel.
Otherwise, the decomposition in equation \eqref{eq:dec} allows to express $\PP_\rho$ as a convex mixture of probability measures corresponding to irreducible quantum channels:
\[
\begin{split}
    &\PP_\rho(i_1,\dots,i_n)=\tr(V_{i_n} \cdots V_{i_1} \rho V_{i_1}^* \cdots V_{i_n}^*)\\
    &=\sum_{j \in J}\underbrace{\tr(p_{\hh_j} \rho)}_{=:\lambda_j(\rho)} \tr\left (p_{\hh_j}V_{i_n}p_{\hh_j} \cdots p_{\hh_j}V_{i_1}p_{\hh_j} \underbrace{\frac{p_{\hh_j}\rho p_{\hh_j}}{\tr(p_{\hh_j} \rho)}}_{=:\rho_j} (p_{\hh_j}V_{i_1} p_{\hh_j})^*\cdots (p_{\hh_j}V_{i_n}p_{\hh_j})^* \right )\\
    &=\sum_{j \in J} \lambda_j(\rho) \PP_{\rho_j}(i_1,\dots, i_n).
\end{split}
\]
The last result that we need to recall (\cite[Theorem 3.5.2]{Gi22}) is that, given any $f:I \rightarrow \mathbb{R}$, there exists a random variable $\Gamma$ taking values in $J$ such that
\[\lim_{n \rightarrow +\infty}\frac{1}{n}\sum_{k=1}^{n}f(X_k)=\pi_\Gamma(f) \quad \PP_\rho \text{-a.s.,}
\]
where $\pi_j(f)=\sum_{i \in I} f(i) \pi_j(i)$ and $\pi_j(i)=\tr(\sigma_j V_i^*V_i)$. As we already mentioned, under $\PP_{\rho_j}$
\[\lim_{n \rightarrow +\infty}\frac{1}{n}\sum_{k=0}^{n-1}f(X_k)=\pi_j(f) \text{ a.s.,}
\]
which means that ${\rm supp}(\PP_{\rho_j}) \subseteq \{\pi_\Gamma(f)=\pi_j(f)\}$. We have now all the ingredients required to apply previous results in this more general instance: let $\gamma>0$, $n \geq 1$, then
\[\PP_\rho\left ( \left |\frac{1}{n}\sum_{k=1}^{n}f(X_k)-\pi_\Gamma(f) \right |\geq\gamma \right )= \sum_{j \in J}\lambda_j(\rho)\PP_{\rho_j}\left ( \left |\frac{1}{n}\sum_{k=1}^{n}f(X_k)-\pi_j(f) \right |\geq \gamma \right )
\]
and we can apply either Theorem \ref{theo:main1} or Theorem \ref{theo:main2} for upper bounding every $\PP_{\rho_j}\left ( \left |\frac{1}{n}\sum_{k=1}^{n}f(X_k)-\pi_j(f) \right |>\gamma \right )$. We remark that for applying Theorem \ref{theo:main1}, $\Phi^\dagger_{|p_{\hh_j}\MM p_{\hh_j}} \Phi_{|p_{\hh_j}\MM p_{\hh_j}}$ needs to be irreducible, while Theorem \ref{theo:main2} requires that $n\gamma >2(1+\|({\rm Id}-\Phi_{|p_{\hh_j}\MM p_{\hh_j}})_{{\cal F}_j}^{-1}\|_\infty)$ for ${\cal F}_j:=\{x \in p_{\hh_j}\MM p_{\hh_j}: \tr(\sigma_j x)=0\}$ (by $\Phi_{|p_{\hh_j}\MM p_{\hh_j}}$ we mean $(\Phi^*_{|p_{\hh_j}\MM p_{\hh_j}})^*$).


\subsection{Time-dependent and Imperfect Measurements} \label{ss:timedep}
Both inequalities in Theorems \ref{theo:main1} and \ref{theo:main2} can be easily generalized to the setting where we allow the measurement to change along time and the evolution of the system state after the measurement is taken to be more general (including imperfect measurements). Let us consider $\Phi$ an irreducible quantum channel with unique faithful invariant state $\sigma$ and suppose that for any time $n \in \mathbb{N}$ we consider a (possibly) different unravelling of $\Phi$, i.e. a collection of completely positive, sub-unital maps $\{\Phi_{i,n}\}_{i \in I_n}$ such that $|I_n|<+\infty$ and $\Phi=\sum_{i \in I_n} \Phi_{i,n}$; at any time we pick a function $f_n:I_n \rightarrow \mathbb{R}$. In the time inhomogeneous case, even if the system starts from the unique invariant state $\sigma$, the law of the process changes at any time: under $\PP_\sigma$, $X_k$ is distributed according to the probability measure $\pi_k$ defined as follows:
\[\pi_k(i)=\tr(\Phi_{i,k}^*(\sigma)), \quad i \in I_k.
\]
We denote by $\pi_k(f_k)$ the expected value of $f_k$ under $\pi_k$, i.e. $\pi_k(f_k)=\sum_{i \in I_k} f_k(i) \pi_k(i)$.

The proofs of Theorems \ref{theo:main1} and \ref{theo:main2} can be carried out in this more general setting with minimal modifications:
\begin{itemize}
    \item \textbf{Bernstein-type inequality:}
\begin{equation} \label{eq:tdm1}
\PP_\rho \left (\frac{1}{n} \sum_{k=1}^{n}\left ( f_k(X_k)- \pi_k(f) \right ) \geq \gamma \right) \leq N_\rho \exp \left ( - n\frac{\gamma^2 \varepsilon}{6b_n^2}h\left ( \frac{10 c_n \gamma}{3b_n^2}\right )\right )
\end{equation}
where $N_\rho=\|\sigma^{-\frac{1}{2}}\rho \sigma^{-\frac{1}{2}}\|_2$, $\pi_k(f_k):=\sum_{i \in I_k} f_k(i) \tr(\Phi_{i,k}^*(\sigma))$, $c_n:=\max_{k=1,\dots,n}\|f_k-\pi_k(f_k)\|_\infty$ and $b_n^2:=\frac{1}{n} \sum_{k=1}^{n}\pi_k((f_k-\pi_k(f_k))^2)$.

\bigskip Explicit computations show that, in this case, the Laplace transform have the following form:
    \[\mathbb{E}_\rho[e^{u \sum_{k=1}^{n}(f_k(X_k)-\pi_k(f_k))}]=\tr(\rho \Phi_{u,1} \cdots \Phi_{u,n}({\bf 1})) \leq \|\sigma^{-\frac{1}{2}}\rho \sigma^{-\frac{1}{2}}\|_2 \cdot \prod_{k=1}^{n}\|\Phi_{u,k}\|_2,\]
    where $\Phi_{u,k}(x)=\sum_{i \in I_k} e^{u (f_k(i)-\pi_k(i))} \Phi_{i,k}(x)$. One can use the same techniques as in the proof of Theorem \ref{theo:main1} to upper bound every single $\|\Phi_{u,k}\|_2$ getting to the expression below:
    \[\begin{split}
        \mathbb{E}_\rho[e^{u \sum_{k=1}^{n}(f_k(X_k)-\pi_k(f_k))}]& \leq N_\rho e^{ \frac{3u^2}{\varepsilon} \sum_{k=1}^{n} \pi_k((f_k-\pi_k(f_k))^2) \left ( 1-\frac{10 \|f_k-\pi_k(f_k)\|_\infty u}{\varepsilon} \right )^{-1}} \\
        &\leq N_\rho e^{-n\left (-\frac{3u^2}{\varepsilon}  \frac{1}{n}\sum_{k=1}^{n} \pi_k((f_k-\pi_k(f_k))^2)\left ( 1-\frac{10c_n u}{\varepsilon} \right )^{-1} \right )}. \\
        \end{split}\]
Applying Chernoff bound and optimizing on $u>0$, one obtains equation \eqref{eq:tdm1}.
\item \textbf{Hoeffding-type inequality:}
\begin{equation} \label{eq:tdm2}
\PP_\rho \left (\frac{1}{n} \sum_{k=1}^{n}\left ( f_k(X_k)- \pi_k(f) \right ) \geq \gamma \right) \leq \exp \left ( - \frac{(n \gamma -G_n)^2}{(n-1) G_n^2}\right ) \text{ for } n \gamma \geq G_n,
\end{equation}
where $G_n=(1+\sum_{j=0}^{n-2}\|\Phi_{|\mathcal{F}}^j\|_\infty)c_n$ ($\sum_{j=0}^{-1}$ must be interpreted as $0$), $\mathcal{F}:=\{x \in \MM : \tr(\sigma x)=0\}$ and $c_n$ is the same as above.

\bigskip Let us define
\[Z_k^{(n)}=\sum_{j=k}^{n} \mathbb{E}_\rho[f(X_j)|X_1, \rho_1,\dots, X_k, \rho_k].
\]
Notice that for $k=1,\dots, n-1$, $Z_k^{(n)}=f_k(X_k)+\mathbb{E}_\rho[Z_{k+1}^{(n)}|X_1,\rho_1,\dots, X_k,\rho_k]$, hence we can write
\[\sum_{k=1}^{n} f_k(X_k)=\sum_{k=2}^{n}\underbrace{Z_k^{(n)}-\mathbb{E}_\rho[Z_{k}^{(n)}|X_1,\rho_1,\dots, X_{k-1},\rho_{k-1}]}_{D_k}+Z_1^{(n)}.
\]
By Markov property, we have that $Z_k^{(n)}=g_k^{(n)}(X_k,\rho_k)$ and using the explicit expression of the transition operator of $(X_n,\rho_n)$, one gets that $g_k^{(n)}$ has the following form:
\[g_k^{(n)}(i,\omega)=f_k(i)+\sum_{j=0}^{n-k-1} \tr(\omega \Phi^j({\bf F}_{k+j+1})), \quad {\bf F}_{k+j+1}=\sum_{i \in I_{k+j+1}}f_{k+j+1}(i)\Phi_{i,k+j+1}({\bf 1}).
\]
Equation \eqref{eq:tdm2} follows from the same reasoning as in the proof of Theorem \ref{theo:main2} once we observe that
\[\|g_k^{(n)}\|_\infty \leq \left (1+\sum_{j=0}^{n-k-1}\|\Phi_{|\mathcal{F}}^j\|_\infty \right ) c_n \leq \left (1+\sum_{j=0}^{n-2}\|\Phi_{|\mathcal{F}}^j\|_\infty \right ) c_n.
\]

\end{itemize}

It would be interesting to generalize these results to the case where measurements are chosen adaptively, i.e. they may depend on the outcomes of the previous measurements; this would find applications for instance in the task of estimating unknown parameters of the unitary interaction $U$ between the system and the ancillas (see Section \ref{sec:pe}), since an adaptive measurement strategy has been recently shown to be able to asymptotically extract from the output ancillas the maximum amount of information about the unknown parameter (\cite{GG22}).
\subsection{Multi-time Statistics}

In some cases, one is interested in functions of the output measurements at different times: for instance, as in the case of classical Markov chains treated in Section \ref{ss:cMC}, the task could be estimating the rate of jump at stationarity from a certain state to another. Previous techniques still provide bounds for this kind of situations: given $m\geq 2$ and $f:I^m \rightarrow \mathbb{R}$, the natural stochastic process to consider is the one given by $(\underline{X}_n,\rho_n)$ where $\underline{X}_n:=(X_k,\dots,X_{k+m-1})$ and $\rho_n$ is the conditional system state. It is easy to see that it is a Markov process with the following transition probabilities: for every $i_1,\dots, i_m \in I$
\[
\begin{split}
    &\PP_\rho \left (\underline{X}_1=(i_1,\dots,i_m),\rho_1=\frac{V_{i_m}\cdots V_{i_1}\rho V_{i_1}^*\cdots V_{i_m}^*}{\tr(V_{i_m}\cdots V_{i_1}\rho V_{i_1}^*\cdots V_{i_m}^*)} \right )=\tr(V_{i_m}\cdots V_{i_1}\rho V_{i_1}^*\cdots V_{i_m}^*)\\
\end{split}
\]
and for every $n\geq 1$, $j \in I$
\[
\begin{split}
    &\PP_\rho \left (\underline{X}_{n+1}=(i_2,\dots,i_m,j),\rho_{n+1}=\left .\frac{V_{j}\omega V_{j}^*}{\tr(V_{j}\omega V_{j}^*)} \right |\underline{X}_n=(i_1,\dots, i_m), \rho_n=\omega \right )=\tr(V_{j}\omega V_{j}^*).\\
\end{split}
\]
All the other possibilities occur with zero probability.

The transition operator corresponding to this enlarged process is the following
\[P(g)(i_1,\dots,i_m,\omega)=\sum_{j \in I}g\left (i_2,\dots,i_m,j,\frac{V_j\omega V^*_j}{\tr(V_j\omega V^*_j)} \right ) \tr(V_j \omega V_j^*)
\]
where $i_1,\dots,i_m \in I$, $\omega$ is a state on $\hh$ and $g$ is a bounded measurable function. With the same heuristic reasoning used for one-time statistics, we can provide a solution for the Poisson equation in this case too. Let us define the following function:
\[g(\underline{i}, \omega)=f(\underline{i})+\tr(\omega (B^{(m)}(\underline{i})+A^{(m)}_f))
\]
where $\underline{i}=(i_1,\dots,i_m) \in I^m$, $\omega$ is a state on $\hh$,
\[B^{(m)}(\underline{i})=\sum_{k=1}^{m-1} \underbrace{\sum_{j_1,\dots,j_k \in I} f(i_{k+1},\dots, i_{m},j_1,\dots,j_k)V_{j_k}^*\cdots V_{j_1}^*V_{j_1}V_{j_k}}_{=:F_k^{(m)}}
\]
and $A_f^{(m)}$ is the unique solution of
\[(\Id-\Phi)(A_f^{(m)})=F_m^{(m)}, \quad F_m^{(m)}:=\sum_{j_1,\dots,j_m \in I} f(j_1,\dots,j_m)V_{j_m}^*\cdots V_{j_1}^*V_{j_1}V_{j_m}
\]
such that $A_f^{(m)} \in {\cal F}=\{x \in \MM : \tr(\sigma x)=0\}$. Notice that $A_f^{(m)}$ exists if
\[\tr(\sigma F_m^{(m)})=\sum_{j_1,\dots,j_m \in I} f(j_1,\dots,j_m)\tr(\sigma V_{j_m}^*\cdots V_{j_1}^*V_{j_1}V_{j_m})=0,
\]
which means that the function $f$ is centered with respect to the probability measure $\pi^{(m)}(j_1,\dots, j_m)=\tr(\sigma V_{j_m}^*\cdots V_{j_1}^*V_{j_1}V_{j_m})$ (which is the law of $m$ consecutive measurements in the stationary regime). Reasoning as for $m=1$, we get the following inequality.
\begin{prop}
For every $m\geq 1$, $f:I^m \rightarrow \mathbb{R}$ such that
\[\pi^{(m)}(f)=\sum_{j_1,\dots, j_m \in I} f(j_1,\dots, j_m) \tr(\sigma V_{j_1}^* \cdots V_{j_m}^* V_{j_m} \cdots V_{j_1})=0\]
and $\|f\|_\infty=c$ for some $c >0$, then for every $\gamma > 0$
\begin{equation}
\PP_\rho \left (\frac{1}{n} \sum_{k=1}^{n} f(\underline{X}_k)\geq \gamma \right) \leq \exp \left ( - \frac{(n \gamma -2G)^2}{2 (n-1) G^2}\right ) \text{ for } n \gamma \geq 2G,
\end{equation}
where $G=(m+\|({\rm Id}-\Phi)^{-1}_{|\mathcal{F}}\|_\infty)c$ and $\mathcal{F}:=\{x \in \MM : \tr(\sigma x)=0\}$.
\end{prop}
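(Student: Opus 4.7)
The plan is to mirror the proof of Theorem \ref{theo:main2} for the enlarged Markov chain $(\underline{X}_k,\rho_k)$, using the candidate Poisson function $g(\underline{i},\omega)=f(\underline{i})+\tr(\omega(B^{(m)}(\underline{i})+A_f^{(m)}))$ already proposed in the text. There are two new ingredients to establish compared to the $m=1$ case: (i) verifying that this $g$ solves the Poisson equation $g-Pg=f$ for the transition operator of $(\underline{X}_k,\rho_k)$, and (ii) showing $\|g\|_\infty\leq G$. The martingale--Hoeffding--Chernoff--Fenchel-Legendre machinery then applies unchanged.

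For (i), I would substitute $g$ into the explicit form $Pg(\underline{i},\omega)=\sum_j g((i_2,\dots,i_m,j),V_j\omega V_j^*/\tr(V_j\omega V_j^*))\tr(V_j\omega V_j^*)$ and expand the three contributions. The $f$-piece collapses to $\tr(\omega F_1^{(m)}(\underline{i}))$, reconstructing the $k=1$ summand of $B^{(m)}(\underline{i})$. The $B^{(m)}(i_2,\dots,i_m,j)$-piece telescopes: for each $k=1,\dots,m-2$, the operator $\sum_j V_j^*F_k^{(m)}(i_2,\dots,i_m,j)V_j$ equals $F_{k+1}^{(m)}(\underline{i})$, reconstructing the $(k+1)$-th summand of $B^{(m)}(\underline{i})$, while for $k=m-1$ it produces a leftover $F_m^{(m)}$ independent of $\underline{i}$. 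The $A_f^{(m)}$-piece yields $\tr(\omega\Phi(A_f^{(m)}))$, and the defining identity $(\Id-\Phi)(A_f^{(m)})=F_m^{(m)}$ absorbs the leftover $F_m^{(m)}$. Adding everything gives $Pg-g=-f$. Existence of $A_f^{(m)}\in\mathcal{F}$ follows from irreducibility of $\Phi$ together with the hypothesis $\tr(\sigma F_m^{(m)})=\pi^{(m)}(f)=0$, exactly as in Lemma \ref{lemm:poiss}.

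For (ii), since $\omega$ is a state, the triangle inequality gives $\|g\|_\infty\leq\|f\|_\infty+\sup_{\underline{i}}\|B^{(m)}(\underline{i})\|+\|A_f^{(m)}\|$. Each $\|F_k^{(m)}\|$ is controlled by repeating the argument of Lemma \ref{lemm:poiss}: decompose $F_k^{(m)}$ into two positive semidefinite pieces according to the sign of $f$, each dominated by $c\sum_{j_1,\dots,j_k}V_{j_k}^*\cdots V_{j_1}^*V_{j_1}\cdots V_{j_k}=c\Phi^k(\mathbf{1})=c\mathbf{1}$ by unitality of $\Phi$, and apply \cite[Corollary 3.17]{Zh02} to conclude $\|F_k^{(m)}\|\leq c$. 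Summing over $k=1,\dots,m-1$ gives $\|B^{(m)}(\underline{i})\|\leq(m-1)c$, and inverting $\Id-\Phi$ on $\mathcal{F}$ against $F_m^{(m)}$ yields $\|A_f^{(m)}\|\leq\|(\Id-\Phi)^{-1}_{|\mathcal{F}}\|_\infty c$. Combining, $\|g\|_\infty\leq mc+\|(\Id-\Phi)^{-1}_{|\mathcal{F}}\|_\infty c=G$.

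With (i) and (ii) in place, the rest follows Theorem \ref{theo:main2} verbatim: telescope $\sum_{k=1}^n f(\underline{X}_k)=\sum_{k=2}^n D_k+g(\underline{X}_1,\rho_1)-\mathbb{E}_\rho[g(\underline{X}_{n+1},\rho_{n+1})|\underline{X}_n,\rho_n]$ with martingale increments $|D_k|\leq 2G$ and boundary contribution of modulus at most $2G$, apply Hoeffding's lemma iteratively to the conditional Laplace transforms to obtain $\mathbb{E}_\rho[\exp(u\sum_k f(\underline{X}_k))]\leq\exp(2Gu+(n-1)G^2u^2/2)$, and optimize the resulting Chernoff bound over $u>0$ (valid precisely when $n\gamma\geq 2G$) to recover the rate $(n\gamma-2G)^2/[2(n-1)G^2]$. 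The main technical hurdle is the telescoping bookkeeping in step (i), where one must carefully track how each summand of $B^{(m)}$ shifts under $P$; the sup-norm bound is a direct extension of Lemma \ref{lemm:poiss}, and the martingale-Hoeffding step is identical to the $m=1$ case.
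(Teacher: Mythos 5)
Your proposal is correct and follows the same route as the paper: the paper itself only records the enlarged Markov chain, the candidate Poisson solution $g(\underline{i},\omega)=f(\underline{i})+\tr(\omega(B^{(m)}(\underline{i})+A_f^{(m)}))$, and then invokes the $m=1$ argument, and your telescoping verification of $g-Pg=f$, the bound $\|g\|_\infty\leq mc+\|({\rm Id}-\Phi)^{-1}_{|\mathcal{F}}\|_\infty c=G$, and the martingale--Hoeffding--Chernoff step are exactly the details being deferred to. No gaps.
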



\subsection{Parameter Estimation} \label{sec:pe}

Concentration inequalities in Theorems \ref{theo:main1} and \ref{theo:main2} can be used in order to find confidence intervals for dynamical parameters (and possibly perform hypothesis testing): suppose that the unitary interaction  $U$ between the system and the ancillas depends on a unknown parameter $\theta \in \Theta \subseteq \mathbb{R}$, which we want to estimate via indirect measurements. Kraus operators $V_i(\theta)$ and the steady state $\sigma(\theta)$ depend on the parameter too and so does the asymptotic mean:
\[\pi(f)(\theta)=\sum_{i \in I} f(i)
\tr(\sigma(\theta)V_i(\theta)^*V_i(\theta)).\]
For the sake of clarity, we are now going to treat the simplest case in which $\overline{f}_n:=\frac{1}{n}\sum_{k=1}^{n}f(X_k)$ is a consistent estimator for $\theta$, i.e. when $\pi(f)(\theta)=\theta$; however, one can easily generalize the same reasoning to more general instances. Theorems \ref{theo:main1} and \ref{theo:main2} can  be used to estimate the probability that $\theta$ lays in a interval centered at $\overline{f}_n$: indeed, for every $\gamma>0$
\[\begin{split}
    &\PP_{\rho,\theta} \left (\theta \in \left ( f_n - \gamma, f_n+ \gamma \right )\right ) =\PP_{\rho,\theta} \left (\left |f_n-\theta \right | < \gamma \right)=1-\PP_{\rho,\theta} \left (\left |f_n-\theta \right | \geq \gamma \right)\\
    & \geq 1-2N_\rho \exp \left (-\max \left \{ n\frac{ \gamma^2 \varepsilon(\theta)}{8b^2(\theta)}h \left ( \frac{5c \gamma}{b^2(\theta)}\right ),  \frac{(n \gamma -2G(\theta))^2}{2(n-1)G(\theta)^2} 1_{\{n\gamma >2G(\theta)\}}\right \} \right). 
\end{split}
\]
\sloppy Since the real value of the parameter $\theta$ is unknown, one is usually interested in lower bounding $\PP_{\rho,\theta} \left (\theta \in \left ( f_n - \gamma, f_n+ \gamma \right ) \right )$ uniformly for $\theta \in \Theta$ or its average with respect to a certain prior measure $\mu$ on $\Theta$, i.e. $\int_{\Theta}\PP_{\rho,\theta} \left (\theta \in \left ( f_n - \gamma, f_n+ \gamma \right ) \right )d\mu(\theta)$; in order to obtain meaningful lower bounds, in the first case one needs either $1/\varepsilon(\theta)$ or $G(\theta)$ to be uniformly bounded, while in the second case it is enough that the set where $1/\varepsilon(\theta)$ or $G(\theta)$ grows unboundedly is given a small probability by the prior $\mu$. We remark that $1/\varepsilon(\theta)$ and $G(\theta)$ approaching $+\infty$ is a phenomenon related to $\Phi_\theta(\cdot):= \sum_{i \in I} V_i^*(\theta) \cdot V_i(\theta)$ losing ergodicity and approaching a phase transition.

\section{Conclusions and Outlook} \label{sec:conclusions}

We derived a generalisation of Bernstein's and Hoeffding's concentration bounds for the time average of measurement outcomes of discrete time quantum Markov chains. 

Our results hold under quite general and easily verifiable assumptions and depend on simple and intuitive quantities related to the quantum Markov chain. We were also able to apply the same techniques employed for showing the Bernstein-type inequality to provide a concentration bound for the counting process of a continuous time quantum Markov process; this result complements deviation bounds obtained using different techniques in \cite{BCJP21}. While our strategy was inspired by works on concentration bounds for the empirical mean for classical Markov chains \cite{Le98,GO02,BQJ18,FJQ21}, when restricted to the classical setting, our results provide extensions to empirical fluxes of the corresponding classical bounds. 

Our work here finds a natural application in the study of finite-time fluctuations of dynamical quantities in physical systems, something of core interest in both classical and quantum statistical mechanics. Our work provides the tools to deal with problems which are tackled using concentration bounds in statistical models which involve the more general class of stochastic processes that can be seen as output processes of quantum Markov chains (which include important examples, e.g. independent random variables, Markov chains, hidden Markov models). 

Our results should be useful in several areas. One is the estimation of dynamical parameters in quantum Markov evolutions, where a natural extension of our results would be to the case where measurements are chosen adaptively in time \cite{GG22}, and to more general additive functionals of the measurement trajectory. A second area of interest is in the connection to so-called thermodynamic uncertainty relations (TURs), which are general {\em lower bounds} on the size of fluctuations in trajectory observables such as time-integrated currents or dynamical activities. TURs were postulated initially for classical continuous-time Markov chains \cite{Barato2015b} (and proven via large deviation methods \cite{Gingrich2016}), and later generalised in various directions, including finite time \cite{Pietzonka2017}, discrete Markov dynamics \cite{Proesmans2017}, first-passage times \cite{Garrahan2017,Gingrich2017}, and quantum Markov processes
Refs.~\cite{Brandner2018,Carollo2019,Guarnieri2019,Hasegawa2020}. The concentration bounds like the ones we consider here bound the size of fluctuation {\em from above}, and are therefore complementary to TURs. 
It will be interesting to see how to use the concentration bounds for fluxes obtained here to formulate ``inverse TURs'' that upper-bound dynamical fluctuations in terms of general quantities of interest like entropy production and dynamical activity, as happens with standard TURs.

\bigskip \noindent \paragraph{Aknowledgements.} This work was supported by the EPSRC grant EP/T022140/1. 

\noindent \paragraph{Declarations.}

\noindent \paragraph{Data Availability Statement.} Data sharing not applicable to this article as no datasets were generated or analysed during the current study.

\bibliographystyle{abbrv}
\bibliography{biblio}

\end{document}